\begin{document}

\title{A Note on Cyclic Codes from  APN Functions
}


\author{Chunming Tang  \and Yanfeng Qi
\and Maozhi Xu 
}


\institute{Chunming Tang \at
              School of Mathematics and Information, China West Normal University, Sichuan Nanchong, 637002, China\\
           \and
           Yanfeng Qi \at
              LMAM, School of Mathematical Sciences,
              Peking University, Beijing, 100871,
              and Aisino Corporation Inc., Beijing, 100195, China\\
           \and
            Maozhi Xu \at
           LMAM, School of Mathematical Sciences, Peking University, Beijing, 100871, China\\
}

\date{Received: date / Accepted: date}

\maketitle

\begin{abstract}
Cyclic codes, as linear block error-correcting codes in coding theory,  play a vital role and have wide applications. Ding in
\cite{D} constructed a number of classes of
cyclic codes from almost perfect nonlinear (APN)
functions and planar functions over finite fields and
presented ten open problems on cyclic codes from highly
nonlinear functions. In this paper, we consider two open problems involving the inverse APN functions $f(x)=x^{q^m-2}$ and
the Dobbertin APN function
$f(x)=x^{2^{4i}+2^{3i}+2^{2i}+2^{i}-1}$. From the calculation of
linear spans and the minimal polynomials of  two
sequences generated by  these two classes of APN functions, the dimensions  of the corresponding cyclic codes are determined and lower bounds  on the minimum weight of these cyclic codes are presented.
Actually, we present a framework for the minimal polynomial and linear span of the sequence $s^{\infty}$ defined by
$s_t=Tr((1+\alpha^t)^e)$, where $\alpha$ is a primitive element in $GF(q)$.
These techniques can also be applied into other open problems in
\cite{D}.
\keywords{Cyclic codes \and sequences \and
linear span \and minimal polynomials \and APN functions}
\end{abstract}

\section{Introduction}
As a subclass of linear codes, cyclic codes with efficient
encoding and decoding algorithms \cite{CHI,F,P}, have been widely applied into data storage systems and communication systems.   Hence, cyclic codes attract much attention and have been widely
studied. Many research results have been made \cite{BM,C,HP,LW}.

Throughout this paper, let $p$ be a prime and $q$ be a power of $p$. Let
$r=q^m$, where $m$ is a positive integer.
A linear $[n,k,d]$ code $\mathcal{C}$ over finite field  $GF(q)$
is  a $k$ dimension subspace of $GF(q)^n$ with
minimum distance $d$. The linear code $\mathcal{C}$ is called a cyclic code if $(c_0,c_1,\cdots,c_{n-1})\in \mathcal{C}$ implies that $(c_{n-1},c_0,\cdots,c_{n-2})\in \mathcal{C}$. Let
$gcd(n,q)=1$. Then there exists the following correspondece:
\begin{eqnarray*}
\pi: GF(q)^n&\longrightarrow & GF(q)[x]/(x^n-1)\\
(c_0,c_1,\ldots, c_{n-1})&\longmapsto&
c_0+c_1x+\ldots+c_{n-1}x^{n-1}.
\end{eqnarray*}
Then a codeword $(c_0,c_1,\cdots,c_{n-1})\in \mathcal{C}$
can be identified with  the polynomial
$\label{(1)}c_0+c_1x+\cdots+c_{n-1}x^{n-1}\in GF(p)[x]/(x^n-1)$. Further, $\mathcal{C}$ is a cyclic code if and only if $
\pi(\mathcal{C})$ is an ideal of $GF(p)[x]/(x^n-1)$. Since
$GF(p)[x]/(x^n-1)$ is a principal ideal ring, there exists a
unique monic polynomial $g(x)|(x^n-1)$ satisfying $
\pi(\mathcal{C})=\langle g(x) \rangle$. We also write
$\mathcal{C}=\langle g(x) \rangle$. Then
$g(x)$ is called the generator polynomial of
$\mathcal{C}$  and $h(x)=(x^n-1)/g(x)$ is called the parity-check
polynomial of $\mathcal{C}$.

Sequences over finite field $GF(q)$ can be utilized to construct
cyclic codes \cite{D}, that is, the minimal polynomial of a periodic sequence can be used as the generator polynomial of a cyclic code. Let $s^\infty=(s_t)_{t=0}^\infty$ be a sequence
over $GF(q)$ with period $q^m-1$. Then there exists the smallest positive integer $l$, satisfying
\begin{align}\label{(2)}-c_0s_i=c_1s_{i-1}+c_2s_{i-2}+
\cdots+c_ls_{i-l},~(i\geq l),\end{align}
where $c_0=1,c_1,\cdots,c_l\in GF(q)$ and  $c_i(0\leq i \leq l)$ are uniquely determined.
The unique polynomial
\begin{align}\label{(3)}\mathbb{M}_s(x)=c_0+c_1x+\cdots+c_lx^l\in GF(q)[x]\end{align}
is called the minimal polynomial of the sequence $s^\infty$ and  $l$ is the linear span (also called linear complexity) of  $s^\infty$, which is denoted as $\mathbb{L}_s$. Generally, any polynomial as a constant multiple of $\mathbb{M}_s(x)$ is also
a minimal polynomial of   $s^\infty$. Further, $\mathbb{M}_s(x)$ and $\mathbb{L}_s$ are given by \cite{LN}
\begin{align}
\mathbb{M}_s(x)&=\frac{x^{q^m-1}-1}{gcd(x^{q^m-1}
-1,S^{q^m-1}(x))},\label{(4)}\\
\mathbb{L}_s&=q^m-1-deg(gcd(x^{q^m-1}-1,S^{q^m-1}(x))),
\label{(5)}\end{align}
where
$$S^{q^m-1}(x)=s_0+s_1x+\cdots+s_{q^m-2}x^{q^m-2}\in GF(q)[x].$$
Hence, the cyclic code with the generator polynomial $\mathbb{M}_s(x)$ is
\begin{align}\label{(6)}\mathcal{C}_s=\langle \mathbb{M}_s(x)\rangle.
\end{align}
Then, $\mathcal{C}_s$ is an $[q^m-1,q^m-1-\mathbb{L}_s,d]$ cyclic code. Consequently, the determination of $\mathbb{M}_s(x)$ and
$d$ is essential for studying these cyclic code, which is the main problem in
\cite{D} and this paper.

Generally, it is difficult to determine $\mathbb{M}_s(x)$ for
a sequence $s^\infty$. DingÔÚ\cite{D} considered a special
class of sequences with the form
\begin{align}
\label{(7)}s_t=Tr((1+\alpha^t)^e),
\end{align}
where $\alpha$ is a primitive element of $GF(r)$, $Tr(\cdot)$ is the trace function from $GF(r)$ to
$GF(q)$ and $e$ is an exponent making the
monomial $f(x)=x^e$ an almost perfect nonlinear function or a
planar function. In some special cases, Ding determined $\mathbb{M}_s(x)$, which induced a number of classes of cyclic codes.  Further, ten open problems for the
determination of $\mathbb{M}_s(x)$ were proposed.

In this paper, some results on $\mathbb{M}_s(x)$ are given first.
Then we solve the following two open problems presented by Ding.
Two classes of APN functions used to induce cyclic codes are

{\rm (1)} The inverse APN function over finite field $GF(r)$\cite{HR}: $f(x)=x^{q^m-2}$;

{\rm  (2)} The Dobbertin APN function over finite field $GF(2^{5i})$ \cite{DO1}: $f(x)=x^{2^{4i}+2^{3i}+2^{2i}+2^{i}-1}$.

The minimal polynomial of  the inverse APN function with $p=2$ was
solved in \cite{D,SD}. We generalize the result for a general odd  prime $p$, which gives the corresponding cyclic code. In fact, this technique can be useful for other
open problems presented by Ding.

\section{Preliminary}
In this section, some notations and results are introduced.
Let $Z_N=\{0,1$, $\cdots$, $N-1\}$ with the integer addition modulo $N$
and integer multiplication modulo $N$ operation for a positive integer $N$.
Let the $q-$cyclotomic coset containing $j$ modulo $N$ with the action
$x\longmapsto qx\mod N$ be defined by
$$
C_j=\{j,qj, q^2j,\ldots, q^{l_j-1}\} \subset Z_N,
$$
where $l_j$ is the smallest positive integer such that
$q^{l_j}j\equiv j \mod N$ and $l_j$ is called the size of
$C_j$.
For a sequence over finite field $GF(q)$ with period $q^m-1$,
we have the following result on the linear span and minimal polynomial \cite{AB}.
\begin{lemma}\label{GF}
Let $s^\infty$ be a sequence with period $q^m-1$ over finite field
$GF(q)$ , which has a unique expansion of the form
$$
s_t=\sum_{i=0}^{q^m-2}c_i\alpha^{it}, \forall t\geq 0,
$$
where $\alpha$ is a generator of $GF(q^m)^*$, and $c_i\in GF(q^m)$. Let $I=\{i:c_i\neq 0\}$, then the minimal polynomial of $s^\infty$ is
$$
\mathbb{M}_s(x)=\prod_{i\in I}(1-\alpha^ix),
$$
and the linear span $\mathbb{L}_s$ of $s^\infty$ is $|I|$.
\end{lemma}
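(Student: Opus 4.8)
The plan is to derive Lemma~\ref{GF} directly from the generating‑function formulas \eqref{(4)} and \eqref{(5)}. Put $S(x):=S^{q^m-1}(x)=\sum_{t=0}^{q^m-2}s_tx^t$. Substituting $s_t=\sum_{i\in I}c_i\alpha^{it}$, interchanging the two finite sums, and summing each geometric progression (using that $\alpha$ has order $q^m-1$, so $(\alpha^ix)^{q^m-1}=x^{q^m-1}$), one gets
\begin{align*}
S(x)=\sum_{i\in I}c_i\sum_{t=0}^{q^m-2}(\alpha^ix)^t=\sum_{i\in I}c_i\,\frac{x^{q^m-1}-1}{\alpha^ix-1}=(x^{q^m-1}-1)\sum_{i\in I}\frac{c_i}{\alpha^ix-1}.
\end{align*}
Over $GF(q^m)$ we have $x^{q^m-1}-1=\prod_{j\in Z_{q^m-1}}(x-\alpha^{j})$, so the polynomial $\prod_{i\in I}(\alpha^ix-1)$ (whose roots $\alpha^{-i}$, $i\in I$, are $(q^m-1)$‑st roots of unity) divides $x^{q^m-1}-1$. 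Writing the sum over the common denominator $\prod_{i\in I}(\alpha^ix-1)$ therefore yields a factorization
\begin{align*}
S(x)=D(x)\,P(x),\qquad D(x):=\frac{x^{q^m-1}-1}{\prod_{i\in I}(\alpha^ix-1)},\qquad P(x):=\sum_{i\in I}c_i\!\!\prod_{\substack{j\in I\\ j\neq i}}(\alpha^jx-1),
\end{align*}
where $D(x)$ is a polynomial of degree $q^m-1-|I|$ (in fact a scalar multiple of $\prod_{i\in Z_{q^m-1}\setminus I}(1-\alpha^ix)$) and $\deg P\le|I|-1$.

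Next I would compute $g(x):=\gcd\bigl(x^{q^m-1}-1,\,S(x)\bigr)$. Since the gcd of two polynomials over $GF(q)$ is unchanged under extension of scalars, I may compute it in $GF(q^m)[x]$, where $x^{q^m-1}-1=D(x)\prod_{i\in I}(\alpha^ix-1)$; hence $g(x)$ equals, up to a nonzero scalar, $D(x)\cdot\gcd\bigl(\prod_{i\in I}(\alpha^ix-1),\,P(x)\bigr)$. The polynomial $\prod_{i\in I}(\alpha^ix-1)$ has the $|I|$ distinct simple roots $\alpha^{-i_0}$, $i_0\in I$. Evaluating $P$ at such a root, every term with $i\neq i_0$ contains the vanishing factor $(\alpha^{i_0}\alpha^{-i_0}-1)=0$, so
\begin{align*}
P(\alpha^{-i_0})=c_{i_0}\!\!\prod_{\substack{j\in I\\ j\neq i_0}}(\alpha^{\,j-i_0}-1)\neq0,
\end{align*}
since $c_{i_0}\neq0$ and $\alpha^{j-i_0}\neq1$ whenever $j\neq i_0$ in $Z_{q^m-1}$. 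Thus $P$ and $\prod_{i\in I}(\alpha^ix-1)$ have no common root, the inner gcd is $1$, and $g(x)$ is a scalar multiple of $D(x)$, of degree $q^m-1-|I|$.

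With this, \eqref{(5)} gives $\mathbb{L}_s=(q^m-1)-\deg g=|I|$, and \eqref{(4)} gives that $\mathbb{M}_s(x)=(x^{q^m-1}-1)/g(x)$ is a scalar multiple of $\prod_{i\in I}(\alpha^ix-1)$; imposing the normalization that the constant term be $1$ (as in \eqref{(2)}--\eqref{(3)}) turns each factor $\alpha^ix-1$ into $1-\alpha^ix$, so $\mathbb{M}_s(x)=\prod_{i\in I}(1-\alpha^ix)$. That this product automatically has coefficients in $GF(q)$ is forced by \eqref{(4)} and reflects that $I$ is a union of $q$‑cyclotomic cosets with $c_{qi}=c_i^q$. \emph{The one genuinely delicate point} is the coprimality of $P(x)$ and $\prod_{i\in I}(\alpha^ix-1)$ --- i.e.\ the nonvanishing of $P$ at the roots $\alpha^{-i_0}$ --- together with the care needed to move between $GF(q)[x]$ and $GF(q^m)[x]$ when taking gcds; the rest is a formal substitution. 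An alternative route bypasses \eqref{(4)}--\eqref{(5)} entirely: one checks directly that $\prod_{i\in I}(1-\alpha^ix)$ annihilates $s^\infty$ (equivalently $\mathbb{M}_s(\alpha^{-j})=0$ for every $j\in I$), and then deduces minimality from the linear independence over $GF(q^m)$ of the sequences $t\mapsto\alpha^{jt}$ for distinct $j\in I$ (a Vandermonde argument), which forces any shorter annihilating polynomial to vanish at all $\alpha^{-j}$, $j\in I$, hence to have degree at least $|I|$.
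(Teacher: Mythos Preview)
Your argument is correct. Note, however, that the paper does not supply its own proof of Lemma~\ref{GF}: the lemma is quoted from \cite{AB} as a known result, so there is no in-paper proof to compare against. Your derivation from the generating-function identities \eqref{(4)}--\eqref{(5)} is clean and self-contained; the key step --- checking $P(\alpha^{-i_0})\neq0$ so that $\gcd\bigl(\prod_{i\in I}(\alpha^ix-1),P(x)\bigr)=1$ --- is handled correctly, as is the passage between $GF(q)[x]$ and $GF(q^m)[x]$ when computing the gcd. The alternative Vandermonde route you sketch at the end is essentially the classical textbook argument and would work equally well.
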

From this lemma, in order to determine the minimal polynomial and linear span, we just need to compute the expansion of $s_t$.
The Lucas's theorem \cite{L} can be a useful tool in the computation.
\begin{theorem}\label{NM}
Let $N=n_tq^t+n_{t-1}q^{t-1}+\cdots+n_1q+n_0$ and $M=m_tq^t+
m_{t-1}q^{t-1}+\cdots+m_1q+m_0$, then
$$
\binom{N}{M}\equiv \binom{n_{t}}{m_{t}}\binom{n_{t-1}}{m_{t-1}}\cdots
\binom{n_{0}}{m_{0}} \mod p.
$$
\end{theorem}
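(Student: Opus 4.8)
The plan is to prove this congruence exactly as one proves the classical Lucas theorem \cite{L}, by a generating-function computation in the polynomial ring $GF(p)[x]$; the only extra point here is that the base is a prime power $q$ rather than the prime $p$ itself. Throughout, $n_i$ and $m_i$ denote the base-$q$ digits of $N$ and $M$ (so $0\le n_i,m_i\le q-1$), and as usual $\binom{a}{b}$ is read as $0$ whenever $b>a$.

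First I would record the iterated ``freshman's dream'': since $(1+x)^p\equiv 1+x^p \pmod{p}$ in $\mathbb{Z}[x]$, applying this identity repeatedly gives $(1+x)^q\equiv 1+x^q\pmod{p}$, and hence $(1+x)^{q^i}\equiv 1+x^{q^i}\pmod{p}$ for every $i\ge 0$. Each step only uses that $a\mapsto a^p$ is additive modulo $p$.

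Next I would expand, using $N=\sum_{i=0}^{t}n_iq^i$,
$$(1+x)^N=\prod_{i=0}^{t}\bigl((1+x)^{q^i}\bigr)^{n_i}\equiv\prod_{i=0}^{t}\bigl(1+x^{q^i}\bigr)^{n_i}=\prod_{i=0}^{t}\sum_{k_i=0}^{n_i}\binom{n_i}{k_i}x^{k_iq^i}\pmod{p}.$$
A typical monomial on the right is $x^{\sum_i k_iq^i}$ with $0\le k_i\le n_i\le q-1$; since each $k_i$ lies in $\{0,\dots,q-1\}$, the integer $\sum_i k_iq^i$ has $k_0,\dots,k_t$ as its genuine base-$q$ digits, so these digits are determined uniquely. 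Comparing the coefficient of $x^M$ on both sides (it is $\binom{N}{M}$ on the left), the only contribution on the right comes from $k_i=m_i$ for all $i$: this occurs precisely when $m_i\le n_i$ for every $i$, and then the coefficient equals $\prod_{i=0}^{t}\binom{n_i}{m_i}$; if instead $m_i>n_i$ for some $i$ the coefficient is $0$, which matches since then $\binom{n_i}{m_i}=0$. This proves the congruence. (As a sanity check at the boundary, if $M>N$ then necessarily $m_i>n_i$ for some $i$ --- otherwise $M=\sum m_iq^i\le\sum n_iq^i=N$ --- so both sides vanish modulo $p$.)

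I do not expect a genuine obstacle here: this is a textbook fact. The one thing needing a little care, relative to the usual base-$p$ statement, is to remember to iterate the Frobenius identity all the way up to the exponent $q^i$, and to note that even though the digits $n_i,m_i$ may now exceed $p-1$, the binomial expansion of $\bigl(1+x^{q^i}\bigr)^{n_i}$ over $\mathbb{Z}$ together with the uniqueness of base-$q$ representations lets the coefficient extraction go through verbatim. An alternative route is to deduce the base-$q$ statement from base-$p$ Lucas by regrouping the base-$p$ digits into base-$q$ blocks, but the direct computation above is shorter.
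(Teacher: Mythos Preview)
Your argument is correct and is the standard generating-function proof of Lucas's theorem, with the small adaptation to base $q=p^k$ handled properly via iterated Frobenius. There is nothing to compare against, however: in the paper this theorem is not proved at all but simply quoted as a classical result with a reference to Lucas \cite{L}, so your proposal in fact supplies more than the paper does.
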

Two most important classes of  nonlinear functions in cryptography are almost perfect
nonlinear (APN) functions and perfect nonlinear (or planar) functions \cite{BD,NY}, which can be used to construct cyclic codes \cite{CCD,CCZ,CDY,D}.
A function  $f:GF(r)\longrightarrow GF(r)$ is called a planar
function if
$$
\max_{a\in GF(r)^*,b\in GF(r)}\#(\{x\in GF(r):f(x+a)-f(x)=b\})=1.
$$
And a function $f:GF(r)\longrightarrow GF(r)$
is called an almost perfect nonlinear (APN) function
if
$$\max_{a\in GF(r)^*,b\in GF(r)}\#(\{x\in GF(r):f(x+a)-f(x)=b\})=2.$$

The APN functions considered in this paper are

$\bullet$ The inverse APN function \cite{HR}: $f(x)=x^{q^m-2}$;\\

$\bullet$ The Dobbertin APN function \cite{DO1}: $f(x)=x^{2^{4i}+2^{3i}+2^{2i}+2^{i}-1},m=5i.$

\section{The minimal polynomial of sequence from binomial functions}
In this section, we analyze the determination of the minimal polynomial
of the sequence $s^{\infty}$ with $s_t=Tr((1+\alpha^t)^e)$, where
$0\leq e \leq q^m-2$. Further, Let $n=q^m-1$ for the rest of the paper.
\begin{align}
s_t&=Tr((1+\alpha^t)^e)\notag\\
             &=Tr( \sum_{i=0}^{q^m-2}\binom{e}{i}\alpha^{it})\notag\\
            &=\sum_{i=0}^{q^m-2}\binom{e}{i}\sum_{j=0}^{m-1}
            \alpha^{q^{j}it}\notag\\
            &=\sum_{i=0}^{q^m-2}(\sum_{j=0}^{m-1}\binom{e}{q^ji\mod n}\mod p)\alpha^{it}\notag\\
            &=\sum_{i=0}^{q^m-2}C_{e,q,m}(i)\alpha^{it}\label{(8)}
\end{align}
where $C_{e,q,m}(i)=\sum_{j=0}^{m-1}\binom{e}{q^ji\mod n}\mod p.$

Define
\begin{align}
Supp(C_{e,q,m})=\{i:0\leq i<n,C_{e,q,m}(i)\neq 0\}.\label{(9)}
\end{align}
From the definition of $C_{e,q,m}(i)$, we have
$$
C_{e,q,m}(q^ji)=C_{e,q,m}(i).
$$
Then there exists a subset $\widetilde{Supp}(C_{e,q,m})$ of
$Supp(C_{e,q,m})$ satisfying
\begin{align}
Supp(C_{e,q,m})=\bigcup_{i\in \widetilde{Supp}(C_{e,q,m})}\{q^ji \mod n:j=0,1,\cdots,m-1\}.\label{(10)}
\end{align}
Further, we can choose $\widetilde{Supp}(C_{e,q,m})$ with the condition that if  $i$,$i'$ $\in \widetilde{Supp}(C_{e,q,m})$ and $j=1,\cdots,m-1$,
$n\nmid (i'-q^ji)$. Then
\begin{align}
s_t&=Tr((1+\alpha^t)^e)\notag\\
             &=\sum_{i\in Supp(C_{e,q,m})}C_{e,q,m}(i)\alpha^{it}\notag\\
            &=\sum_{i\in \widetilde{Supp}(C_{e,q,m})}\sum_{i'\in C_{i}}C_{e,q,m}(i')\alpha^{i't}\label{(11)}
\end{align}
where $C_i$ is the $q$-cyclotomic coset containing $i$ modulo $n=q^m-1$,  that is , $C_i=\{q^ji\mod n:j=0,1,\cdots,m-1\}$.

From the above discussion, we have the following theorem.
\begin{theorem}\label{LM}
Let $s^\infty$ be a sequence over $GF(q)$ defined by $s_t=Tr((1+\alpha^t)^e)$.
Then the linear span $\mathbb{L}_s$ of $s^\infty$ is $\#(Supp(C_{e,q,m}))$, and the minimal polynomial of $s^\infty$ is
$$
\mathbb{M}_s(x)=\prod_{\gamma\in Supp(C_{e,q,m})}(x-\gamma^{-1})=\prod_{\gamma\in \widetilde{Supp}(C_{e,q,m})}m_{\gamma^{-1}}(x),
$$
where $m_{\gamma^{-1}}(x)$ is the minimal polynomial of $\gamma^{-1}$ over $GF(q)$.
\end{theorem}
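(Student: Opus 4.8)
The plan is to show that Theorem~\ref{LM} follows directly from Lemma~\ref{GF} once the expansion of $s_t$ in \eqref{(11)} is interpreted correctly. First I would invoke the identity \eqref{(8)}, which rewrites $s_t = \sum_{i=0}^{q^m-2} C_{e,q,m}(i)\,\alpha^{it}$ with $C_{e,q,m}(i) = \sum_{j=0}^{m-1}\binom{e}{q^j i \bmod n}\bmod p$; the key point is that the coefficients $C_{e,q,m}(i)$ lie in $GF(q)$ (indeed in the prime field $GF(p)$), so this is exactly the unique additive-character expansion required by Lemma~\ref{GF}, with $c_i = C_{e,q,m}(i)$. By uniqueness of that expansion, the index set $I$ of Lemma~\ref{GF} is precisely $\{i : 0\le i < n,\ C_{e,q,m}(i)\neq 0\} = Supp(C_{e,q,m})$ as defined in \eqref{(9)}.

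Given this identification, Lemma~\ref{GF} immediately yields $\mathbb{L}_s = |I| = \#(Supp(C_{e,q,m}))$ and $\mathbb{M}_s(x) = \prod_{i\in I}(1-\alpha^i x)$. Then I would reconcile the two displayed forms of the minimal polynomial: up to the constant $\prod_{i\in I}(-\alpha^i)$ (which is harmless since any constant multiple of $\mathbb{M}_s$ is a minimal polynomial, as noted after \eqref{(3)}), one has $\prod_{i\in I}(1-\alpha^i x) \sim \prod_{i\in I}(x - \alpha^{-i}) = \prod_{\gamma\in Supp(C_{e,q,m})}(x-\gamma^{-1})$, writing $\gamma$ for the running index $i$. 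For the second equality, I would use the relation $C_{e,q,m}(q^j i) = C_{e,q,m}(i)$ established in the text just before \eqref{(10)}, which shows $Supp(C_{e,q,m})$ is a union of full $q$-cyclotomic cosets $C_i$ modulo $n$; grouping the linear factors $\prod_{i'\in C_i}(x-\alpha^{-i'})$ and recalling that the roots of $m_{\gamma^{-1}}(x)$ over $GF(q)$ are exactly the $q$-power conjugates $\{\alpha^{-q^j \gamma} : j=0,\dots,m-1\}$, each block equals $m_{\gamma^{-1}}(x)$ for a chosen coset representative $\gamma \in \widetilde{Supp}(C_{e,q,m})$. Since distinct representatives give distinct cosets (by the condition $n\nmid (i'-q^j i)$ imposed when choosing $\widetilde{Supp}$), the product over $\widetilde{Supp}(C_{e,q,m})$ of these minimal polynomials recovers $\mathbb{M}_s(x)$ exactly.

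The only genuinely substantive points, rather than bookkeeping, are: (i) verifying that $C_{e,q,m}(i)\in GF(q)$ so Lemma~\ref{GF} applies — this follows because the inner sum is a sum of binomial coefficients reduced mod $p$, hence an element of the prime subfield; and (ii) justifying that a product of distinct $q$-cyclotomic factors $\prod_{i'\in C_i}(x-\alpha^{-i'})$ is precisely the minimal polynomial $m_{\gamma^{-1}}(x)$ over $GF(q)$ — this is the standard description of minimal polynomials of elements of $GF(q^m)^*$ via cyclotomic cosets, and I would cite it as folklore (or reference \cite{LN}). I expect the main (mild) obstacle to be purely notational: keeping track of the inversion $i \mapsto \alpha^{-i}$ and the passage from the $(1-\alpha^i x)$ normalization of Lemma~\ref{GF} to the monic $(x-\gamma^{-1})$ normalization in the statement, and making sure that the cyclotomic-coset grouping is compatible with the choice of $\widetilde{Supp}(C_{e,q,m})$ fixed in the paragraph preceding \eqref{(11)}. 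No new estimates or constructions are needed — the theorem is essentially a repackaging of Lemma~\ref{GF} together with the character-sum computation \eqref{(8)}--\eqref{(11)}.
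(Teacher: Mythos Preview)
Your proposal is correct and follows exactly the same approach as the paper, which simply says the theorem follows from Identity~(\ref{(11)}) and Lemma~\ref{GF}. You have merely spelled out the bookkeeping (the normalization $(1-\alpha^i x)\sim(x-\alpha^{-i})$ and the cyclotomic-coset grouping) that the paper leaves implicit.
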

\begin{proof}
From Identity (\ref{(11)}) and Lemma \ref{GF}, this theorem follows.
\end{proof}
\section{Cyclic codes from APN function: $x^{q^m-2}$}

Note that $\binom{N-1}{M}=\frac{N-M}{N}
\binom{N}{M}$, then
\begin{align}
C_{e,q,m}(i)&\equiv \sum_{i'\equiv q^ji\mod n,j\in Z_m}\binom{e}{i'}\notag\\
&\equiv \sum_{i'\equiv q^ji\mod n,j\in Z_m}\frac{e+1-i'}{e+1}\binom{e+1}{i'}\notag\\
&\equiv \sum_{i'\equiv q^ji\mod n,j\in Z_m}(1+i')\binom{e+1}{i'}\mod p\label{(12)}
\end{align}
Let $i=\sum_{j=0}^{m-1}i_jq^j$, then $qi \mod n=i_{m-1}+i_0q+\cdots+i_{m-2}q^{m-1}$. From
$e+1=\sum_{j=0}^{m-1}(q-1)q^j$ and Lucas's theorem, we have
\begin{align*}
\binom{e+1}{qi\mod n}&\equiv \binom{q-1}{i_{m-2}}\cdots\binom{q-1}{i_{0}}\binom{q-1}{i_{m-1}}\\
&\equiv \binom{q-1}{i_{m-1}}\cdots\binom{q-1}{i_{1}}\binom{q-1}{i_{0}}\\
&\equiv \binom{e+1}{i} \mod p
\end{align*}
For $i'=q^ji\mod n (j \in Z_m)$,  $\binom{e+1}{i'}\equiv \binom{e+1}{i} \mod p$. Then we get
\begin{align}
C_{e,q,m}(i)&\equiv \binom{e+1}{i}\sum_{i'\equiv q^ji\mod n,j\in Z_m}(1+i')\notag\\
&\equiv \binom{e+1}{i}(\sum_{j=0}^{m-1}i_j+m) \mod p\label{(13)}
\end{align}
where $i_j\equiv q^ji \mod n$.
Let $s_q(i)=\sum_{j=0}^{m-1}i_j\label{(14)}$, then
\begin{align}
Supp(C_{e,q,m})=\{i\in Z_n:s_q(i)+m\not\equiv0 \mod p\}.\label{(15)}
\end{align}
Further, we have
\begin{align}
\#(Supp(C_{e,q,m}))=q^m(1-\frac{1}{p}).
\label{(16)}
\end{align}
\begin{theorem}\label{LM1}
Let $s^\infty$ be a sequence defined by $s_t=Tr((1+\alpha^t)^{q^m-2})$.
Then the linear span $\mathbb{L}_s$ of $s^\infty$ is $q^m(1-\frac{1}{p})$, and the minimal polynomial of
$s^\infty$ is
\begin{align}
\mathbb{M}_s(x)=\prod_{\mbox{\tiny$\begin{array}{c}i\in Z_n\\s_q(i)+m\not\equiv0\mod p\end{array}$}}(x-\alpha^{-i}),\label{M1}
\end{align}
where $s_q(i)=\sum_{j=0}^{m-1}i_j$.
\end{theorem}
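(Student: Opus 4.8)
The plan is to specialize Theorem \ref{LM} to the exponent $e=q^m-2$ and then carry out two tasks: (i) identify the set $Supp(C_{e,q,m})$ in closed form, and (ii) compute its cardinality. Since $e+1=q^m-1=\sum_{j=0}^{m-1}(q-1)q^j$, task (i) should come down to elementary congruences for binomial coefficients modulo $p$ combined with Lucas's theorem (Theorem \ref{NM}), and task (ii) to an equidistribution argument for base-$q$ digit sums.

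First I would evaluate $C_{e,q,m}(i)$. Using the identity $\binom{N-1}{M}=\frac{N-M}{N}\binom{N}{M}$ with $N=e+1$ and $e+1\equiv-1\pmod p$, each term $\binom{e}{i'}$ becomes $\equiv(1+i')\binom{e+1}{i'}\pmod p$. Writing $i=\sum_j i_jq^j$ in base $q$, the $q$-shift $q^ji\bmod n$ merely permutes the digits cyclically, so Lucas's theorem gives $\binom{e+1}{q^ji\bmod n}\equiv\prod_l\binom{q-1}{i_l}\equiv\binom{e+1}{i}\pmod p$, independently of $j$; moreover this value is nonzero modulo $p$, because $q-1=p^s-1$ has all base-$p$ digits equal to $p-1$, so by Lucas again each $\binom{q-1}{k}\not\equiv0\pmod p$ for $0\le k\le q-1$. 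Since $p\mid q$, an integer reduces modulo $p$ to its base-$q$ units digit, and as $j$ ranges over $Z_m$ the units digits of $q^ji\bmod n$ run through $i_0,\dots,i_{m-1}$; hence $\sum_j\bigl(1+(q^ji\bmod n)\bigr)\equiv m+s_q(i)\pmod p$. Putting these together, $C_{e,q,m}(i)\equiv\binom{e+1}{i}\,(s_q(i)+m)\pmod p$, so $Supp(C_{e,q,m})=\{i\in Z_n:\ s_q(i)+m\not\equiv0\pmod p\}$, which is Identity (\ref{(15)}).

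Next I would count this set. Over the full range $0\le i\le q^m-1$ the digit sum $s_q(i)$ is equidistributed modulo $p$: for a primitive $p$-th root of unity $\zeta$ one has $\sum_{i=0}^{q^m-1}\zeta^{s_q(i)}=\bigl(\sum_{d=0}^{q-1}\zeta^d\bigr)^m=0$, since $p\mid q$ forces $\sum_{d=0}^{q-1}\zeta^d=0$; thus each residue class of $s_q(i)$ modulo $p$ contains exactly $q^m/p$ of these integers. Passing to $Z_n=\{0,\dots,q^m-2\}$ deletes only $i=q^m-1$, for which $s_q(q^m-1)+m=m(q-1)+m=mq\equiv0\pmod p$, so exactly the class $s_q(i)\equiv-m$ loses one element, giving $\#(Supp(C_{e,q,m}))=(q^m-1)-(q^m/p-1)=q^m(1-\tfrac{1}{p})$, which is Identity (\ref{(16)}). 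Finally, feeding $Supp(C_{e,q,m})$ and this count into Theorem \ref{LM} (with $\gamma=\alpha^i$, $\gamma^{-1}=\alpha^{-i}$) yields $\mathbb{L}_s=q^m(1-\tfrac{1}{p})$ and the displayed formula (\ref{M1}).

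The main obstacle is pinning down the two arithmetic inputs precisely: that $\binom{q^m-1}{i}\not\equiv0\pmod p$ for every $i$ — so that membership in $Supp(C_{e,q,m})$ is governed solely by $s_q(i)+m\bmod p$ — and the equidistribution count. For the latter the delicate point is that one works over $Z_n$ rather than over all of $\{0,\dots,q^m-1\}$: exactly one index is removed, it happens to lie in the class $s_q(i)\equiv-m$, and overlooking this would change the dimension of $\mathcal{C}_s$ by one. Everything else is bookkeeping with base-$q$ expansions.
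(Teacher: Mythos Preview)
Your proposal is correct and follows essentially the same route as the paper: derive Identity~(\ref{(13)}) via $\binom{N-1}{M}=\frac{N-M}{N}\binom{N}{M}$ and Lucas's theorem, read off (\ref{(15)}) and (\ref{(16)}), and then invoke Theorem~\ref{LM}. In fact you are more careful than the paper in two places it leaves implicit---the nonvanishing of $\binom{q^m-1}{i}\pmod p$ and the equidistribution count behind (\ref{(16)}), including the check that the single excised index $i=q^m-1$ lies in the excluded residue class.
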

\begin{proof}
From Identities (\ref{(15)}), (\ref{(16)}) and Theorem \ref{LM}, this theorem follows.
\end{proof}
\begin{theorem}\label{CK1}
The cyclic code $\mathcal{C}_s$, induced by the sequence in
Theorem $\ref{LM1}$ is an
$[n,\frac{q^m}{p}-1,d]$ code, whose generator polynomial is given
by Identity $(\ref{M1})$. Further, $d\geq\max \{2p-1,\frac{q(p-1)}{p}+1\}$.
\end{theorem}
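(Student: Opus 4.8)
\emph{Overview.} The dimension and the generator polynomial are immediate: by Theorem~\ref{LM1} the polynomial $\mathbb{M}_s(x)$ divides $x^n-1$ and has degree $\mathbb{L}_s=q^m(1-\tfrac1p)$, so $\mathcal{C}_s=\langle\mathbb{M}_s(x)\rangle$ is an $[\,n,\ n-\mathbb{L}_s,\ d\,]=[\,q^m-1,\ \tfrac{q^m}{p}-1,\ d\,]$ cyclic code whose generator polynomial is $(\ref{M1})$. All the substance lies in the lower bound on $d$, which I would get from the BCH bound together with a sharpening.

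\emph{The zero set.} My first step is to put the zeros of $\mathcal{C}_s$ into usable form. By Lemma~\ref{GF} the roots of $\mathbb{M}_s(x)$ are the $\alpha^{-i}$ with $i\in Z_n$ and $s_q(i)+m\not\equiv0\pmod p$, so the defining set of $\mathcal{C}_s$ is $T=\{-i\bmod n:\ s_q(i)+m\not\equiv0\pmod p\}$. Since $n=q^m-1=\sum_{k=0}^{m-1}(q-1)q^k$, for $1\le i\le n-1$ the base-$q$ expansion of $n-i$ has digits $q-1-i_k$, hence $s_q(n-i)=m(q-1)-s_q(i)$; as $p\mid q$ this gives $s_q(n-i)+m\equiv -s_q(i)\pmod p$. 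Thus $T$ consists of the $j\in Z_n\setminus\{0\}$ with $p\nmid s_q(j)$, together with $0$ precisely when $p\nmid m$.

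\emph{The bound $d\ge 2p-1$.} Here I would exhibit $2p-2$ consecutive exponents inside $T$, namely $j=q-p+1,q-p+2,\dots,q+p-2$. If $q-p+1\le j\le q-1$ then $s_q(j)=j$ and $j\bmod p\in\{1,\dots,p-1\}$ (because $p\mid q$); if $q\le j\le q+p-2$, write $j=q+\ell$ with $0\le\ell\le p-2$, so $s_q(j)=1+\ell\in\{1,\dots,p-1\}$. In both cases $p\nmid s_q(j)$, and for $m\ge 2$ all these $j$ lie in $[1,n-1]$ (the case $m=1$ being degenerate); so $\{q-p+1,\dots,q+p-2\}\subseteq T$ is a run of $2p-2$ consecutive elements and the BCH bound gives $d\ge 2p-1$.

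\emph{The bound $d\ge\frac{q(p-1)}{p}+1$, and the main obstacle.} The key observation is that $T\supseteq\{\,1+i_1+i_2p:\ 0\le i_1\le p-2,\ 0\le i_2\le\tfrac qp-1\,\}=\{1,2,\dots,q-1\}\setminus\{p,2p,\dots,q-p\}$, a block of $q-1$ consecutive integers with only the $\tfrac qp-1$ multiples of $p$ removed. Reading this as a two-dimensional array with steps $1$ and $p$, and using $\gcd(p,n)=1$, the Hartmann--Tzeng bound already gives $d\ge p+\tfrac qp-1$; to reach the full value $q-\tfrac qp+1$ I would push further by trying, in order of preference, (a) to find an arithmetic progression of length $q-\tfrac qp$ inside $T$ whose common difference $c$ is a unit modulo $n$ — any such progression forces $d\ge q-\tfrac qp+1$, since the multiplier $\mu_c$ is a Hamming isometry carrying $\mathcal{C}_s$ to a cyclic code with a genuine consecutive run of that length; (b) the Roos / van Lint--Wilson shift bound applied to the periodic hole pattern of $T$; or (c) a direct Vandermonde-type argument, showing that the power sums $\sum_k v_k\beta_k^{\,j}$ ($j\in T$) of a codeword of weight $w\le q-\tfrac qp$ cannot all vanish. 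Making one of (a)--(c) actually deliver the stated constant — equivalently, showing that deleting $\tfrac qp-1$ evenly spaced points from a length-$(q-1)$ run of zeros still forces minimum distance $q-\tfrac qp+1$ — is the step I expect to be the hardest and the one I would concentrate on.
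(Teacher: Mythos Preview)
Your treatment of the dimension, the generator polynomial, the description of the defining set $T=\{j\in Z_n\setminus\{0\}:p\nmid s_q(j)\}\cup(\{0\}\text{ if }p\nmid m)$, and the consecutive run $\{q-p+1,\dots,q+p-2\}$ giving $d\ge 2p-1$ all match the paper's argument exactly.

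The gap is in the second bound. You correctly list as option~(a) the search for an arithmetic progression of length $q(p-1)/p$ in $T$ whose common difference is a unit modulo $n$, but you do not actually exhibit one, and instead fall back on options~(b) and~(c) and label this ``the hardest step''. In fact the paper's proof is precisely your option~(a) with the simplest possible choice of common difference, $c=p$: since $p\mid q$ one has $\gcd(p,n)=1$, and the progression
\[
\{\,pj+1:\ 0\le j\le \tfrac{q(p-1)}{p}-1\,\}=\{1,\ p+1,\ 2p+1,\ \dots,\ (p-1)(q-1)\}
\]
lies entirely in $T$. Indeed each such $k=pj+1$ satisfies $1\le k\le (p-1)(q-1)$, so (for $m\ge 2$) its base-$q$ expansion is $k=aq+b$ with $0\le a\le p-2$ and $0\le b\le q-1$; from $p\mid q$ we get $b\equiv k\equiv 1\pmod p$, hence $s_q(k)=a+b\equiv a+1\in\{1,\dots,p-1\}\pmod p$, so $p\nmid s_q(k)$. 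The Hartmann--Tzeng (equivalently, generalised BCH) bound with step $p$ then gives $d\ge \tfrac{q(p-1)}{p}+1$ at once. Your two-dimensional Hartmann--Tzeng array with steps $1$ and $p$, which yields only $d\ge p+\tfrac{q}{p}-1$, is simply the wrong way to slice the same set; rotating the roles so that $p$ is the \emph{long} direction is what closes the gap.
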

\begin{proof}
The dimension of $\mathcal{C}_s$ can be obtained by Theorem \ref{LM1}. Then we consider the lower bound of minimum weight $d$.  Note that
the weight distribution of the cyclic code generated by $\mathbb{M}_s(x)$ is
the same as that of the cyclic code generated by the reciprocal polynomial of $\mathbb{M}_s(x)$. The reciprocal polynomial of $\mathbb{M}_s(x)$ has zero points
$\alpha^{j}$,  where
$j\in \{q-p+1,q-p+2,\cdots,q+p-2\}$. From the BCH bound, we have $d\geq 2p-1$. Further,  the reciprocal polynomial of $\mathbb{M}_s(x)$ has zero points $\alpha^{pj+1}$, where $j\in \{0,1,\cdots,\frac{q(p-1)}{p}-1\}$. From the  Hartmann-Tzeng bound \cite{HT},
we have $d\geq \frac{q(p-1)}{p}+1$. Hence, $d\geq\max \{2p-1,\frac{q(p-1)}{p}+1\}$.
\end{proof}
\begin{example}
Let $q=3$ and $m=2$. Let $\alpha$ be a primitive element of $GF(3^2)$ satisfying
$\alpha^2+2\alpha+2=0$. The the corresponding cyclic code $\mathcal{C}_s$ is an $[8,2,6]$ code, whose generator polynomial is
\begin{align*}
\mathbb{M}_s(x)&=x^{6}+2x^{5}+2x^{4}+2x^{2}+x+1.
\end{align*}
And its dual  is an $[8,6,2]$ cyclic code.
\end{example}
\begin{example}
Let $q=3$ and $m=3$. Let $\alpha$ be a primitive element of  satisfying  $\alpha^3+2\alpha+1=0.$ The corresponding cyclic code
$\mathcal{C}_s$ is an $[26,8,10]$ code, whose generator polynomial is
\begin{align*}
\mathbb{M}_s(x)=&x^{18}+2x^{16}+2x^{15}+x^{14}+x^{12}+x^{11}+x^{10}+\\
&x^9+x^8+x^7+x^6+x^4+2x^3+2x^2+1.
\end{align*}
And its dual is an
$[26,18,4]$ cyclic code.
\end{example}
\section{Cyclic codes from the APN function: $x^{2^{4i}+2^{3i}+2^{2i}+2^i-1}$}

Let $m=5i$ and $e=2^{4i}+2^{3i}+2^{2i}+2^i-1$. The binary sequence of  $e$ is
\begin{align}
\underline{e}=\underbrace{00\cdots1}_i
\underbrace{00\cdots1}_i\underbrace{00\cdots1}_i
\underbrace{00\cdots0}_i\underbrace{11\cdots1}_i\label{(17)},
\end{align}
which has $m$ bits.

For an integer $x\in\{0,1,\cdots,n-1\}$, the binary expansion of $x$ is $\sum_{j=0}^{m-1}x_j2^j$, and the corresponding binary sequence  is  $\underline{x}=
x_{m-1}x_{m-1}\cdots x_0$ with $m$ bits. Then $\underline{x}_i$ and $x_i$ are the same. Let $St$ be the set of all the binary sequences with $m$ length except
$1^m=\underbrace{11\cdots1}_m$. Let $G$ be the  group with elements of cyclically shift transformations on  $St$. Then $G$ is a
cyclic code with order $m$. Let $\tau\in G$ and $\tau(b_{m-1}b_{m-2}\cdots b_0)=b_0b_{m-1}\cdots b_1$, then $G=\langle \tau \rangle$. Note that there is a bijection map between $Z_n$ and $St$. Then $\tau$ is corresponding to an action $x\longmapsto q^{m-1}x$ over $Z_n$, that is, a $q-$cyclotomic action on $Z_n$ is corresponding to a cyclic action on $St$.

\begin{definition}\label{D1}
Let $\underline{x},\underline{y}\in St$. It is said that $\underline{x}$ is covered by $\underline{y}$ or $\underline{y}$ is a cover of  $\underline{x}$, if for $\underline{x}_i=1$, then $\underline{y}_i=1$.  It is denoted by that $\underline{x}\preceq \underline{y}$.
\end{definition}
\begin{definition}\label{D2}
Let $\underline{x}\in St$. It is said that $\underline{x}$ is cyclically covered by  $\underline{e}$, if there exists an element $\sigma \in G$ satisfying $\sigma(\underline{x})\preceq \underline{e}$. If such $\sigma$ does not exist, then $\underline{x}$ is not cyclically covered by
$\underline{e}$.
\end{definition}
\begin{definition}\label{D3}
Let $\underline{x}\in St$. the sequence
$\underline{x}$ is called an even sequence if $\#\{\sigma \in G| \sigma(\underline{x})\preceq \underline{e}\}$ is even. Otherwise,
$\underline{x}$ is an odd sequence.
\end{definition}

\begin{lemma}\label{S}
Let $m=5i$ and $e=2^{4i}+2^{3i}+2^{2i}+2^i-1$. Let $p=q=2$, $r=2^m$ and $x\in Z_n$. Then $x\in Supp(C_{e,q,m})$ if and only if $\underline{x}$ is an odd sequence.
\end{lemma}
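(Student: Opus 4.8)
The plan is to specialize the general identity for $C_{e,q,m}$ to the binary case $p=q=2$ and then translate every quantity into the combinatorics of binary strings and the shift group $G$. By Identity~(\ref{(8)}), for $p=q=2$ and $x\in Z_n$ one has
$$
C_{e,2,m}(x)\equiv\sum_{j=0}^{m-1}\binom{e}{2^{j}x\bmod n}\pmod 2,
$$
so, by the definition of $Supp$ in~(\ref{(9)}), proving the lemma amounts to showing that this sum is odd exactly when $\underline{x}$ is an odd sequence. The first step is to evaluate each summand modulo $2$ using Lucas's theorem (Theorem~\ref{NM}): writing $e$ and $2^{j}x\bmod n$ with $m$ binary digits, $\binom{e}{2^{j}x\bmod n}\equiv 1\pmod 2$ if and only if $\underline{2^{j}x\bmod n}\preceq\underline{e}$ in the sense of Definition~\ref{D1} (a digit $1$ of the lower argument forces a digit $1$ of $e$ in the same place, and all the relevant strings lie in $St$ since $e$, $x$ and $2^{j}x\bmod n$ are all strictly smaller than $n=2^{m}-1$).

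The second step is to recognize multiplication by $2$ on $Z_n$ as a cyclic shift: if $\underline{x}=x_{m-1}x_{m-2}\cdots x_{0}$ then $\underline{2x\bmod n}=x_{m-2}x_{m-3}\cdots x_{0}x_{m-1}$, so multiplication by $2^{j}$ realizes the transformation $\tau^{-j}\in G$, where $\tau$ is the generator fixed in the paragraph before Definition~\ref{D1}. Since $\tau$ has order exactly $m$ on $St$, the map $j\mapsto\tau^{-j}$ is a bijection from $\{0,1,\dots,m-1\}$ onto $G$, and therefore the sum over $j$ is a sum over the whole group $G$, giving
$$
C_{e,2,m}(x)\equiv\#\{\sigma\in G:\sigma(\underline{x})\preceq\underline{e}\}\pmod 2.
$$
By Definition~\ref{D3} the right-hand count is odd precisely when $\underline{x}$ is an odd sequence; since an element of $GF(2)$ is nonzero iff it is $\equiv 1\pmod 2$, this is exactly the statement $x\in Supp(C_{e,q,m})\iff\underline{x}$ is odd.

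The argument is short, and I expect the only point requiring care to be the bookkeeping in the second step: the quantity in Definition~\ref{D3} counts \emph{group elements} $\sigma\in G$, not distinct strings in the orbit of $\underline{x}$, and these can differ when $\underline{x}$ is invariant under a nontrivial power of $\tau$ (the extreme case being $\underline{x}=0^{m}$, whose stabiliser is all of $G$). The summation defining $C_{e,2,m}(x)$ runs over $j=0,\dots,m-1$, i.e.\ once for each element of $G$, so it matches Definition~\ref{D3} on the nose; this is exactly why one must insist that $j\mapsto\tau^{-j}$ is a genuine bijection onto a group of order $m$ rather than merely an enumeration of the orbit. Everything else is a direct substitution via Lucas's theorem and the bijection $Z_{n}\leftrightarrow St$.
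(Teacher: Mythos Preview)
Your argument is correct and follows essentially the same route as the paper's own proof: apply Lucas's theorem digit by digit to turn each $\binom{e}{2^{j}x\bmod n}$ into the indicator of the covering relation $\sigma(\underline{x})\preceq\underline{e}$, and identify the sum over $j\in\{0,\dots,m-1\}$ with a sum over all $\sigma\in G$. Your exposition is more explicit than the paper's (in particular your care about counting group elements rather than orbit points, and the direction of the shift $\tau^{-j}$), but the underlying idea is identical.
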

\begin{proof}
From Lucas's theorem, we have
\begin{align*}
C_{e,q,m}(x)&=\sum_{x'=q^jx,j\in Z_m}\binom{e}{x'}\\
&=\sum_{\underline{x}'=\sigma(\underline{x}),\sigma\in G}\prod_{k=0}^{m-1} \binom{\underline{e}_k}{\underline{x}'_k}\\
&=\#(\{\sigma\in G:\sigma(\underline{x})\preceq \underline{e}\})\mod 2.
\end{align*}
Hence, $C_{e,q,m}\neq 0$ if and only if $\underline{x}$ is an odd sequence.
\end{proof}

\begin{lemma}\label{Rl}
Let $l\geq 2$ and $\underline{y}=\underbrace{00\cdots0}_i
\underbrace{00\cdots0}_i\underbrace{00\cdots0}_i
\underbrace{00\cdots0}_i\underbrace{1*\cdots*1}_l
\underbrace{0\cdots0}_{i-l}$, where
$*$ is any element in $\{0,1\}$. Then
$\#\{\sigma(\underline{y})|\sigma \in G\}=m$.
\end{lemma}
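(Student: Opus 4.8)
The plan is to show that the orbit of $\underline{y}$ under the cyclic group $G = \langle \tau \rangle$ of order $m$ has full size $m$; equivalently, that the stabilizer of $\underline{y}$ in $G$ is trivial. Since $G$ is cyclic of order $m=5i$, it suffices to show that $\sigma(\underline{y}) \neq \underline{y}$ for every nontrivial $\sigma \in G$; by the orbit-stabilizer theorem, $\#\{\sigma(\underline{y}) \mid \sigma \in G\}$ then equals $|G| = m$. So the whole argument reduces to proving that $\underline{y}$ has no nontrivial cyclic symmetry.

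First I would describe $\underline{y}$ concretely: it is a binary string of length $m=5i$ whose nonzero positions all lie in the block of $l$ consecutive positions indexed (say) from $i-l$ up to $i-1$, with the two endpoints of that block equal to $1$ (because of the leading and trailing $1$ in $\underbrace{1*\cdots*1}_l$), and with all $m-l$ remaining positions equal to $0$. In particular $\underline{y}$ has at least two ones and at most $l \le i$ ones, and crucially it contains a run of at least $m-l \ge 4i \ge l$ consecutive zeros (the four leading blocks of length $i$, plus the $i-l$ trailing zeros, wrap around to form one zero-run of length $4i + (i-l) = m-l$). Now suppose $\sigma = \tau^k$ fixes $\underline{y}$ for some $1 \le k \le m-1$. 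Then the set $P \subseteq \mathbb{Z}/m\mathbb{Z}$ of positions where $\underline{y}$ is $1$ is invariant under translation by $k$. Hence $P$ is a union of cosets of the subgroup $\langle k \rangle \le \mathbb{Z}/m\mathbb{Z}$, so $|P|$ is a multiple of $m/\gcd(m,k) =: d > 1$, and moreover $P$ being a union of such cosets forces the ones of $\underline{y}$ to be "spread around" the cycle: any two consecutive ones of $P$ (cyclically) are at distance at most $\gcd(m,k) \le m/2$. But $\underline{y}$ has a zero-run of length $m-l$, which between its two bracketing ones gives a cyclic gap of at least $m-l+1 > m/2$ (since $l \le i = m/5 < m/2$, so $m - l + 1 \ge 4i+1 > m/2$). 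This contradiction shows no such $k$ exists, so the stabilizer is trivial.

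The only point that needs a little care — and the main (mild) obstacle — is making the "gap" argument fully rigorous in the cyclic setting: one must check that the long zero-run of $\underline{y}$ really is a single cyclic block of length $m-l$, i.e. that wrapping around from the trailing $i-l$ zeros to the four leading zero-blocks is not interrupted by a one, which is clear from the explicit form of $\underline{y}$ since all ones sit strictly inside the $l$-block. With that observation in place, the coset-decomposition argument above closes immediately: invariance under a nontrivial translation is incompatible with the existence of a cyclic zero-run longer than $m/2$. Hence $\#\{\sigma(\underline{y}) \mid \sigma \in G\} = m$, as claimed. $\qed$
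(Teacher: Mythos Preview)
Your proof is correct, and it reaches the same conclusion as the paper by a genuinely different route. The paper argues directly: it splits the nontrivial shifts $\sigma=\tau^j$, $0<j<m$, into three intervals ($0<j\le i-l$, $i-l<j\le m-l$, $m-l<j<m$) and in each case exhibits an explicit index at which $\tau^j(\underline{y})$ and $\underline{y}$ disagree, using only the two guaranteed ones at positions $i-l$ and $i-1$. Your argument is more structural: you observe that a nontrivial stabilizer forces the support of $\underline{y}$ to be a union of cosets of a proper subgroup $d\mathbb{Z}/m\mathbb{Z}$ with $d\le m/2$, so consecutive ones are at cyclic distance at most $m/2$, while the explicit form of $\underline{y}$ provides a cyclic zero-run of length $m-l\ge 4i$, hence a gap of $m-l+1>m/2$ between two consecutive ones. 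The paper's approach is more hands-on and needs no group-theoretic language; your approach is cleaner and immediately generalizes (any nonzero binary word of length $m$ whose support lies in a window of length at most $m/2$ has trivial cyclic stabilizer), which is a nice bonus even if the paper only needs the special case.
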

\begin{proof}
We just need to prove that if $\sigma\neq id$, then $\sigma(\underline{y})\neq \underline{y}$. Let $\sigma=\tau^j$, where $0<j<m$.

{\rm (1)} If $0<j \leq i-l$, then
$$(\sigma(\underline{y}))_{i-l-j}=\underline{y}_{i-l}=1.$$
Further, $0\leq i-l-j<i-l$. Then
$$\underline{y}_{i-l-j}=0.$$
Hence,
$$(\sigma(\underline{y}))_{i-l-j}\neq \underline{y}_{i-l-j}.$$

{\rm (2)} If $i-l<j \leq m-l$, then
$$(\sigma(\underline{y}))_{m+i-l-j}=\underline{y}_{i-l}=1.$$
Further, $i\leq m+i-l-j\leq m-1$. Then
$$\underline{y}_{m+i-l-j}=0.$$
Hence,
$$(\sigma(\underline{y}))_{m+i-l-j}\neq \underline{y}_{m+i-l-j}.$$

{\rm (3)} If $m-l<j < m$, then
$$(\sigma(\underline{y}))_{m+i-1-j}=\underline{y}_{i-1}=1.$$
Further, $i-1< m+i-1-j< i+l-1$. Then
$$\underline{y}_{m+i-1-j}=0.$$
Hence,
$$(\sigma(\underline{y}))_{m+i-1-j}\neq \underline{y}_{m+i-1-j}.$$

Consequently, when $0< j < m$, $\sigma(\underline{y})=\tau^j(\underline{y})\neq \underline{y}$.
This lemma follows.
\end{proof}
If $l\geq 2$, denote the set of all the elements like
 $\underline{y}$ in Lemma $\ref{Rl}$ by $\mathcal{R}^1_l$, that is,
\begin{align}\label{R111}
\mathcal{R}^1_l=\{\underline{y}\in St:\underline{y}=\underbrace{00\cdots0}_{4i}
\underbrace{1*\cdots*1}_{l}\underbrace{00\cdots0}_{i-l}\}.
\end{align}
\begin{lemma}\label{Rl1}
Let $\underline{y} \in \mathcal{R}^1_{l}$. Let $\sigma=\tau^j$, where $0\leq j \leq i-1$. Then $\sigma(\underline{y})\preceq \underline{e}$ if and only if $0\leq j \leq i-l$. Further,
$\underline{y}$ is an odd sequence if and only if $i-l+1$ is odd.
\end{lemma}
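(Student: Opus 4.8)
The plan is to convert the cover condition $\sigma(\underline{y})\preceq\underline{e}$ into a statement about positions of ones and then count the shifts that satisfy it. Write $E=\{4i,3i,2i\}\cup\{0,1,\dots,i-1\}$ for the set of positions where $\underline{e}$ is $1$; by the definition of $\mathcal{R}^1_l$, every $\underline{y}\in\mathcal{R}^1_l$ has all of its ones among positions $\{i-l,i-l+1,\dots,i-1\}$, and positions $i-1$ and $i-l$ are always ones. Since $\tau$ realizes the action $x\mapsto q^{m-1}x\bmod n$, one has $(\tau^j(\underline{y}))_k=\underline{y}_{(k+j)\bmod m}$, so passing from $\underline{y}$ to $\tau^j(\underline{y})$ cyclically translates the support by $s\mapsto(s-j)\bmod m$. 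Consequently $\sigma(\underline{y})\preceq\underline{e}$ forces the images of positions $i-1$ and $i-l$ to lie in $E$, and this necessary condition will be the decisive one.

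First I would settle the stated equivalence, i.e.\ the range $0\le j\le i-1$. If $0\le j\le i-l$, the support of $\tau^j(\underline{y})$ lies in $\{i-l-j,\dots,i-1-j\}\subseteq\{0,\dots,i-1\}\subseteq E$, hence $\tau^j(\underline{y})\preceq\underline{e}$. If instead $i-l<j\le i-1$, the one in position $i-l$ of $\underline{y}$ moves to position $(i-l-j)\bmod m=5i-l-j$, which as $j$ runs through $\{i-l+1,\dots,i-1\}$ sweeps out $\{4i-l+1,\dots,4i-1\}$; using $2\le l\le i$ one checks this interval is disjoint from $E$, so $\underline{e}$ vanishes there and $\tau^j(\underline{y})\not\preceq\underline{e}$. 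This establishes the claimed equivalence.

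For the parity assertion I need the full count of $j\in\{0,1,\dots,m-1\}$ with $\tau^j(\underline{y})\preceq\underline{e}$, so I must also rule out $i\le j\le m-1$. For such $j$ the one in position $i-1$ of $\underline{y}$ moves to position $i-1-j+m\in\{i,\dots,m-1\}$, which meets $E$ only if it equals $2i$, $3i$ or $4i$, that is only for $j\in\{2i-1,3i-1,4i-1\}$. For each of these three values I would then compute the image of position $i-l$, namely $2i+1-l$, $3i+1-l$ and $4i+1-l$ respectively, and check (again from $2\le l\le i$) that none of them lies in $E$. Hence no $j$ with $i\le j\le m-1$ works, so the number of $j\in\{0,\dots,m-1\}$ with $\tau^j(\underline{y})\preceq\underline{e}$ is exactly $i-l+1$ (the values $j=0,1,\dots,i-l$). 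Since $G=\langle\tau\rangle$ has order $m$, this says $\#\{\sigma\in G:\sigma(\underline{y})\preceq\underline{e}\}=i-l+1$, and by Definition~\ref{D3} $\underline{y}$ is odd precisely when $i-l+1$ is odd.

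Everything here is elementary modular bookkeeping; the main obstacle is to carry out the range $i\le j\le m-1$ faithfully --- it is not covered by the hypothesis of the displayed equivalence yet it is indispensable for the parity count --- and in particular to confirm that the three borderline shifts $j\in\{2i-1,3i-1,4i-1\}$ are genuinely eliminated. Note that $l=1$ is exactly the degenerate value for which the image of position $i-l$ under these shifts would land on $4i$, $3i$ or $2i$, so the hypothesis $l\ge 2$ (built into $\mathcal{R}^1_l$) is used in an essential way.
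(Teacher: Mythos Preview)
Your proof is correct and follows essentially the same idea as the paper's---tracking where the two endpoint ones at positions $i-1$ and $i-l$ land under a cyclic shift and checking membership in the support $E$ of $\underline{e}$. Your argument is in fact more complete: the paper's proof, as written, treats only the range $0\le j\le i-1$ stated in the lemma and then asserts the parity conclusion, whereas you explicitly dispose of the range $i\le j\le m-1$ (including the three borderline shifts $j\in\{2i-1,3i-1,4i-1\}$) that is needed to justify that the total count over $G$ equals $i-l+1$.
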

\begin{proof}
From $0\leq j\leq i-l$, $\sigma(\underline{y})\preceq \underline{e}.$ Further, we have
$$(\sigma(\underline{y}))_{i-1-j}=\underline{y}_{i-1}=1,$$
$$(\sigma(\underline{y}))_{i-l-j}=\underline{y}_{i-l}=1.$$
From $\sigma(\underline{y})\preceq \underline{e}$,
$$\underline{e}_{i-1-j}=\underline{e}_{i-l-j}=1.$$
Hence, the $(i-1-j)$-th element of $\underline{y}$ is $1$. Further, from the $(i-1-j)$-th element, the
$(l-1)$-th element is also $1$. Then
$$i-1-j (~mod~~ m~) \in \{i-1,i-2,\cdots,l-1\},$$
Hence,
$$j \in \{0,1,\cdots,i-l\}.$$
This lemma follows.
\end{proof}
\begin{lemma}\label{R1}
The set of all the odd sequences in
$\bigcup_{2\leq l \leq i}\mathcal{R}^1_l$ is
$$\mathcal{R}^1=\bigcup_{2\leq l \leq i,l\equiv i\mod 2}\mathcal{R}^1_l.$$
Further,
$$
\#(\mathcal{R}^1)=
\begin{cases}\frac{2^i-1}{3},&i\text{~is even}, \cr \frac{2^i-2}{3},&i\text{~is odd}.\end{cases}
$$
\end{lemma}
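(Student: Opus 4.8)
The plan is to read the first assertion off directly from Lemma~\ref{Rl1}, and then to compute the cardinality as a geometric series, splitting on the parity of $i$.

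First I would record two elementary facts about the sets $\mathcal{R}^1_l$, $2\le l\le i$. They are pairwise disjoint: for $\underline{y}\in\mathcal{R}^1_l$ the highest nonzero coordinate of $\underline{y}$ lies at position $i-1$ and the lowest at position $i-l$, so $l$ is recovered from $\underline{y}$. And $\#\mathcal{R}^1_l=2^{l-2}$, since the $l-2$ coordinates strictly between positions $i-l$ and $i-1$ are free while all remaining coordinates are forced. Now Lemma~\ref{Rl1} says that $\underline{y}\in\mathcal{R}^1_l$ is odd if and only if $i-l+1$ is odd, i.e. if and only if $l\equiv i\mod 2$; since this condition depends only on $l$, the set $\mathcal{R}^1_l$ consists entirely of odd sequences when $l\equiv i\mod 2$ and contains no odd sequence otherwise. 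Hence intersecting $\bigcup_{2\le l\le i}\mathcal{R}^1_l$ with the set of odd sequences leaves exactly $\mathcal{R}^1=\bigcup_{2\le l\le i,\ l\equiv i\mod 2}\mathcal{R}^1_l$, which is the first claim.

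For the count, disjointness gives $\#(\mathcal{R}^1)=\sum_{2\le l\le i,\ l\equiv i\mod 2}2^{l-2}$. If $i$ is even, $l$ runs over $2,4,\dots,i$, so the exponents $l-2$ run over $0,2,\dots,i-2$ and the sum is $\sum_{k=0}^{i/2-1}4^{k}=\frac{4^{i/2}-1}{3}=\frac{2^{i}-1}{3}$. If $i$ is odd, the requirement $l\ge 2$ forces $l\ge 3$, so $l$ runs over $3,5,\dots,i$, the exponents $l-2$ run over $1,3,\dots,i-2$, and the sum is $\sum_{k=0}^{(i-3)/2}2\cdot 4^{k}=2\cdot\frac{4^{(i-1)/2}-1}{3}=\frac{2^{i}-2}{3}$. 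This yields the stated formula, and the degenerate case $i=1$ (empty union, value $0$) is consistent with it.

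Almost all of this is bookkeeping once Lemma~\ref{Rl1} is in hand, the one genuinely combinatorial input — that $\underline{y}\in\mathcal{R}^1_l$ is covered by $\underline{e}$ under precisely $i-l+1$ cyclic shifts — having already been settled there. The only points I would take care to spell out rather than leave implicit are the disjointness of the family $\{\mathcal{R}^1_l\}$ and the precise index bookkeeping in the geometric sum: excluding $l=1$ when $i$ is odd, keeping the endpoint $l=i$, and matching the parity of the exponents. Getting these details right is exactly what turns the two sub-sums into the clean closed forms $\frac{2^i-1}{3}$ and $\frac{2^i-2}{3}$.
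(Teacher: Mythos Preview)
Your proposal is correct and follows essentially the same approach as the paper: invoke Lemma~\ref{Rl1} to identify which $\mathcal{R}^1_l$ contribute, note disjointness and $\#(\mathcal{R}^1_l)=2^{l-2}$, and sum the resulting geometric series according to the parity of $i$. You have simply made the disjointness argument and the index bookkeeping in the geometric sum more explicit than the paper does.
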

\begin{proof}
From Lemma \ref{Rl1}, the set of all the odd sequence in $\bigcup_{2\leq l \leq i}\mathcal{R}^1_l$ is
$$
\mathcal{R}^1=\bigcup_{2\leq l \leq i,l\equiv i\mod 2}\mathcal{R}^1_l.
$$
Further, $\#(\mathcal{R}^1_l)=2^{l-2}$. When $l\neq l'$, $\mathcal{R}^1_l\cap \mathcal{R}^1_{l'}=\emptyset$.

If $i$ is even,
$$
\#(\mathcal{R}^1)=2^0+2^2+2^4+\cdots +2^{i-2}=\frac{2^i-1}{3}.
$$
If $i$ is odd,
$$
\#(\mathcal{R}^1)=2^1+2^3+2^5+\cdots +2^{i-2}=\frac{2^i-2}{3}.
$$
This lemma follows.
\end{proof}

\begin{lemma}\label{R2}
Let $\mathcal{R}^2$ be the set of all the elements $\underline{y}$ in $St$ satisfying the following conditions:

{\rm (1)} If $j\geq i$ and $j\notin \{2i,3i,4i\}$, $\underline{y}_j=0$;

{\rm (2)} There exists $j\in \{2i,3i,4i\}$ satisfying $\underline{y}_j=1$;

{\rm (3)} $\#(\{j:0\leq j \leq i-1,\underline{y}_j=1\})\geq 2$.

For any $\underline{y}\in \mathcal{R}^2$, let $\sigma\in G$ satisfying $\sigma(\underline{y})\preceq \underline{e}$. Then $\sigma=1$.
\end{lemma}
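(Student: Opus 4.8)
The plan is to convert the covering relation into an inclusion of position sets modulo $m=5i$ and then run an elementary residue count. Write $Y\subseteq Z_m$ for the set of positions at which $\underline{y}$ equals $1$, and recall that $\underline{e}$ has ones exactly in the positions $E:=\{0,1,\dots,i-1\}\cup\{2i,3i,4i\}$. Since $(\tau^{j}(\underline{y}))_{k}=\underline{y}_{k+j\bmod m}$, the relation $\tau^{j}(\underline{y})\preceq\underline{e}$ is equivalent to $Y-j\subseteq E\pmod m$. Condition (1) already forces $Y\subseteq E$ (so the identity is one admissible $\sigma$, which is precisely why we expect it to be the only one), while conditions (2) and (3) guarantee that $Y$ contains two distinct elements $a<b$ of the run $\{0,\dots,i-1\}$ together with at least one element $c$ of $\{2i,3i,4i\}$. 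So I would fix $\sigma=\tau^{j}$ with $0\le j\le m-1$ and $Y-j\subseteq E$, and aim to show $j=0$.

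First I would record a rigidity property of $E$: if $u,v\in E$ with $v-u\equiv d\pmod m$ and $d\in\{1,\dots,i-1\}$, then both $u$ and $v$ lie in $\{0,\dots,i-1\}$. This is a short finite check, because each of the three isolated points $2i,3i,4i$ is at cyclic distance at least $i$ from every other point of $E$ (this is where $m=5i$ enters); hence if either of $u,v$ were one of $2i,3i,4i$ then $v-u\bmod m$ would fall in $\{i,i+1,\dots,4i\}$, disjoint from $\{1,\dots,i-1\}$. Applying this with $u=a-j$ and $v=b-j$, whose difference is $b-a\in\{1,\dots,i-1\}$, gives $a-j\equiv a'$ and $b-j\equiv b'\pmod m$ with $a',b'\in\{0,\dots,i-1\}$; consequently $j\equiv a-a'\pmod m$ lies in $\{0,1,\dots,i-1\}\cup\{m-i+1,\dots,m-1\}$.

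It then remains to eliminate every candidate except $j=0$ by using the third element $c$, for which $c-j\in E$ must hold as well. If $j\in\{0,\dots,i-1\}$, then $c-j$ lies in the window $\{c-i+1,\dots,c\}$, whose only member of $E$ is $c$ itself, which forces $j=0$. If $j=m-t$ with $t\in\{1,\dots,i-1\}$, then $c-j\equiv c+t\pmod m$ with representative $c+t$, and $c+t$ lands in one of the gaps $\{2i+1,\dots,3i-1\}$, $\{3i+1,\dots,4i-1\}$, $\{4i+1,\dots,5i-1\}$, each disjoint from $E$ --- a contradiction. Hence $j=0$, i.e.\ $\sigma=1$. (When $i=1$ the run $\{0,\dots,i-1\}$ is a single point, so condition (3) cannot be met, $\mathcal{R}^{2}=\emptyset$, and the statement is vacuous.)

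The only real work is the finite case analysis behind the rigidity property of $E$ and the bookkeeping of which residues modulo $5i$ the elements $a-j$, $b-j$, $c-j$ may occupy. It is entirely elementary arithmetic, but the wrap-around branch $j=m-t$ and the need to check that $c-j$ can never accidentally hit another of $2i,3i,4i$ nor re-enter the run $\{0,\dots,i-1\}$ are exactly where a sign slip would break things, so that is the part I would write out most carefully.
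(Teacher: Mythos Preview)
Your proposal is correct and follows essentially the same approach as the paper's proof: both arguments use the two positions $j_1<j_2$ in $\{0,\dots,i-1\}$ (your $a,b$) together with the rigidity of $E$ to pin the shift $j$ to a short interval around $0$, and then use the element in $\{2i,3i,4i\}$ (your $c$, the paper's $j_0$) to force $j=0$. The paper is terser---it obtains the slightly tighter range $-(i-2)\le j\le i-2$ and then simply asserts ``we can have $j=0$''---whereas you make the rigidity property and the final case check explicit, but the underlying idea is identical.
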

\begin{proof}
From Condition (3), choose $j_1$ and $j_2$ satisfying
$$
0\leq j_1< j_2\leq i-1,\underline{y}_{j_1}=\underline{y}_{j_2}=1.
$$
Let $\sigma=\tau^j$ and $\sigma(\underline{y})\preceq \underline{e}$, then
$$\tau^j(\underline{y})_{j_1-j}=\underline{y}_{j_1}=1=
\underline{e}_{j_1-j},$$
$$\tau^j(\underline{y})_{j_2-j}=\underline{y}_{j_2}=1=
\underline{e}_{j_2-j}.$$
Further,
$$1\leq (j_2-j)-(j_1-j)\leq i-1,$$
From the sequence $\underline{e}$,
$$j_2-j (~mod~~ m~)\in \{1,2,\cdots,i-2,i-1\},$$
that is,
$$j=j_2-(i-1),j_2-(i-2),\cdots,j_2-1.$$
Further,
$$-(i-2)\leq j\leq i-2.$$
From Condition (2), just choose $j_0\in\{2i,3i,4i\}$ sastisfying $\underline{y}_{j_0}=1$. Note that
$$\tau^j(\underline{y})_{j_0-j}=\underline{y}_{j_0}=1.$$
Hence,
$$\underline{e}_{j_0-j}=1.$$
We can have $j=0$, that is, $\sigma=1$.
\end{proof}

\begin{lemma}\label{R27}
Let $\mathcal{R}^2$ be defined above. Then $\#(\mathcal{R}^2)=7(2^i-i-1)$.
\end{lemma}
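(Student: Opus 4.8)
\emph{Proof proposal.} The plan is to reduce the statement to an elementary count of binary strings, exploiting that membership in $\mathcal{R}^2$ only constrains a fixed, small set of coordinate positions. First I would observe that condition (1) forces $\underline{y}_j=0$ for every $j$ in the range $\{i,i+1,\ldots,m-1\}\setminus\{2i,3i,4i\}$, so the support of any $\underline{y}\in\mathcal{R}^2$ is contained in the union $A\cup B$, where $A=\{0,1,\ldots,i-1\}$ and $B=\{2i,3i,4i\}$. These two index blocks are disjoint (since $2i>i-1$), and they lie inside $\{0,\ldots,m-1\}$ because $m=5i$. Hence a sequence in $\mathcal{R}^2$ is uniquely determined by the pair formed by its restriction $\underline{y}|_A$ and its restriction $\underline{y}|_B$, and conditions (2) and (3) constrain these two restrictions \emph{independently}: (3) says $\underline{y}|_A$ has Hamming weight at least $2$, while (2) says $\underline{y}|_B$ is not the zero string. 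Therefore $\#(\mathcal{R}^2)$ factors as the product of the number of admissible choices for $\underline{y}|_A$ times the number of admissible choices for $\underline{y}|_B$.

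Next I would count each factor. For the block $B$ of three positions, the admissible strings are exactly the nonzero ones, of which there are $2^3-1=7$. For the block $A$ of $i$ positions, the strings of weight at least $2$ number $2^i$ minus the all-zero string minus the $i$ strings of weight one, namely $2^i-i-1$. Multiplying gives $\#(\mathcal{R}^2)=7(2^i-i-1)$, which is the asserted formula.

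Finally I would verify the one side condition buried in "$\underline{y}\in St$", i.e. that $\underline{y}\neq 1^m$: this is automatic, since position $i$ belongs to the forced-zero range of condition (1) (as $i\notin\{2i,3i,4i\}$ for $i\geq 2$), so $\underline{y}_i=0$ and $\underline{y}$ cannot be the all-ones word; thus every string produced by the count above genuinely lies in $St$, and conversely every element of $\mathcal{R}^2$ is produced. There is essentially no hard step here; the only points one must not overlook are the disjointness of the blocks $A$ and $B$ (which is what legitimizes choosing the two restrictions independently) and this automatic membership in $St$. Everything else is the routine binomial bookkeeping above.
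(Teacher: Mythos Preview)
Your proof is correct and takes essentially the same approach as the paper, which simply asserts that the count follows directly from the definition of $\mathcal{R}^2$; you have spelled out the factorization into the two independent blocks $A=\{0,\ldots,i-1\}$ and $B=\{2i,3i,4i\}$ that the paper leaves implicit. One tiny remark: the condition $i\notin\{2i,3i,4i\}$ in fact holds for every $i\geq 1$, so your $St$-membership check needs no restriction to $i\geq 2$.
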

\begin{proof}
This lemma can be obtained by the definition of $\mathcal{R}^2$.
\end{proof}
\begin{lemma}\label{R4}
Let $\mathcal{R}_4=\{\underline{y}\in St:\underline{y}\preceq \underline{e},wt(\underline{y})=4,\underline{y}\notin \mathcal{R}^1\bigcup\mathcal{R}^2\}$.
Then $\#(\mathcal{R}_4)=i$. Further, let $\sigma\in G$ satisfying $\sigma(\underline{y})\preceq \underline{e}$ for any $\underline{y}\in \mathcal{R}_4$, then $\sigma=1.$
\end{lemma}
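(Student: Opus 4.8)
The plan is to classify the weight-$4$ vectors $\underline{y}\preceq\underline{e}$ according to how their support meets the three isolated ones of $\underline{e}$. Write $m=5i$ and recall from~(\ref{(17)}) that the positions at which $\underline{e}$ equals $1$ are exactly $B\cup\{2i,3i,4i\}$, where $B=\{0,1,\dots,i-1\}$ is the low all-ones block. For $\underline{y}\preceq\underline{e}$ put $a=\#(\{j:\underline{y}_j=1\}\cap\{2i,3i,4i\})$, so $0\le a\le 3$ and $\underline{y}$ has $4-a$ nonzero entries inside $B$. I would handle the cases $a=2,1,0$ in turn and show that the surviving vectors are exactly those with $a=3$, of which there are $i$.

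For $a\in\{1,2\}$ I would simply check that $\underline{y}$ satisfies the three conditions defining $\mathcal{R}^2$ in Lemma~\ref{R2}: it has $4-a\ge 2$ nonzero entries inside $B$ (condition~(3)), a nonzero entry at some position of $\{2i,3i,4i\}$ (condition~(2)), and, since $\underline{y}\preceq\underline{e}$, no nonzero entry at a position $\ge i$ outside $\{2i,3i,4i\}$ (condition~(1)); hence $\underline{y}\in\mathcal{R}^2$ and $\underline{y}\notin\mathcal{R}_4$. For $a=0$ the support of $\underline{y}$ lies in $B$, and here I would first evaluate $\#\{\sigma\in G:\sigma(\underline{y})\preceq\underline{e}\}$. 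Since $2i,3i,4i$ are pairwise at cyclic distance $\ge i$ and each is at cyclic distance $\ge i$ from $B$, while any two positions in the support of a shift $\sigma(\underline{y})$ are at cyclic distance at most $i-1$, such a shift can be covered by $\underline{e}$ only when its support sits inside $B$; the number of placements of a $(w+1)$-position interval inside the length-$i$ block $B$ is $i-w$, where $w$ is the difference between the largest and smallest positions of the support of $\underline{y}$. Thus $\underline{y}$ belongs to $Supp(C_{e,q,m})$, i.e.\ is an odd sequence by Lemma~\ref{S}, exactly when $w\not\equiv i\pmod 2$, and for such a $\underline{y}$ the cyclic shift carrying the largest position of the support to $i-1$ keeps the support inside $B$ and turns $\underline{y}$ into an element of $\mathcal{R}^1_{w+1}$ with $w+1\equiv i\pmod 2$, i.e.\ into a member of $\mathcal{R}^1$; so the case $a=0$ contributes nothing new. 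This leaves $a=3$: the support of $\underline{y}$ is $\{k,2i,3i,4i\}$ with $k\in B$, and conversely every such vector is covered by $\underline{e}$, has weight $4$, lies outside $\mathcal{R}^1$ (its support meets $\{2i,3i,4i\}$) and outside $\mathcal{R}^2$ (it has only one nonzero entry inside $B$, violating condition~(3) of Lemma~\ref{R2}). These $i$ vectors are distinct, so $\#(\mathcal{R}_4)=i$.

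For the final assertion, fix $\underline{y}\in\mathcal{R}_4$ with support $\{k,2i,3i,4i\}$ and $\sigma=\tau^{\,j}$ with $\sigma(\underline{y})\preceq\underline{e}$; then the support of $\sigma(\underline{y})$ is $\{k-j,\,2i-j,\,3i-j,\,4i-j\}$ modulo $m$. The three points $2i-j,3i-j,4i-j$ are pairwise at cyclic distance $\ge i$, whereas any two positions in $B$ are at cyclic distance at most $i-1$, so $B$ contains at most one of them, forcing at least two of $\{2i-j,3i-j,4i-j\}$ to lie in $\{2i,3i,4i\}$. A short check of when $\{2i,3i,4i\}$ meets its translate by $j$ in two or more points leaves only $j\equiv 0,\ i,\ 4i\pmod m$; if $j\equiv i$ then the position $2i-j=i$ would have to be a one of $\underline{e}$, and if $j\equiv 4i$ then $k-j\equiv k+i$ would, but $i$ and $k+i$ both lie in the zero block $\{i,i+1,\dots,2i-1\}$ of $\underline{e}$, a contradiction; hence $j\equiv 0$ and $\sigma=1$. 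The routine parts are the verification of the $\mathcal{R}^2$ conditions and the finite translate computation; the step I expect to cost the most effort is the case $a=0$, where one must confront all weight-$4$ supports inside $B$, use the count $i-w$ to decide which are odd, and check that normalizing the odd ones lands them in $\mathcal{R}^1$.
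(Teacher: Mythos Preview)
Your case split by $a=\#(\{j\in\{2i,3i,4i\}:\underline{y}_j=1\})$ is sound, the cases $a\in\{1,2\}$ are handled correctly, and your argument for $\sigma=1$ is essentially the paper's (in fact slightly sharper: only $j\equiv 0,\pm i$ survive the intersection test, whereas the paper also lists $\pm 2i$ before discarding them). The gap is in the $a=0$ case. There you show that an odd weight-$4$ vector with support in $B$ is \emph{cyclically equivalent} to an element of $\mathcal{R}^1$; but membership in $\mathcal{R}_4$ is defined by $\underline{y}\notin\mathcal{R}^1\cup\mathcal{R}^2$ for $\underline{y}$ itself, not for some shift of it, and you say nothing at all about the even $a=0$ vectors. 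This step genuinely fails. For $i=5$ the vector with support $\{1,2,3,4\}$ has $\underline{y}_{i-1}=\underline{y}_4=1$ but span $l=4\not\equiv i\pmod 2$, so it lies in $\mathcal{R}^1_4\setminus\mathcal{R}^1$, and it violates condition~(2) of $\mathcal{R}^2$; for $i=6$ the vector with support $\{0,1,2,3\}$ is odd (there are $i-w=3$ covering shifts), has $\underline{y}_5=0$ so lies in no $\mathcal{R}^1_l$, and again violates condition~(2) of $\mathcal{R}^2$. Both therefore belong to $\mathcal{R}_4$ as defined, so $\#(\mathcal{R}_4)>i$, and the nontrivial covering shifts of these vectors also refute the $\sigma=1$ conclusion for them.

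The paper's own one-line proof (``from the definition of $\mathcal{R}_4$'') skips over exactly this point; the lemma as literally stated is not correct for $i\ge 5$. What Lemma~\ref{R} and Theorem~\ref{LM2} actually need is the orbit version: the odd weight-$4$ sequences whose $G$-orbit misses $\mathcal{R}^1\cup\mathcal{R}^2$ are precisely the $i$ sequences with support $\{k,2i,3i,4i\}$, $0\le k\le i-1$, each admitting only the trivial covering shift. Your argument \emph{does} establish this corrected statement. The clean repair is to amend the definition of $\mathcal{R}_4$ so that $\underline{y}$ is excluded whenever its $G$-orbit meets $\bigcup_{2\le l\le i}\mathcal{R}^1_l$ (equivalently, whenever the support of $\underline{y}$ lies entirely in $B$); with that change your proof goes through verbatim.
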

\begin{proof}
From the definition of $\mathcal{R}_4$, $\underline{y}\in \mathcal{R}_4$ if and only if
$$
\{j:i\leq j\leq 5i-1,\underline{y}_j=1\}=\{2i,3i,4i\},\#(\{j:0\leq j \leq i-1,\underline{y}_j=1\})=1.
$$
Hence,
$$
\#(\mathcal{R}_4)=i.
$$
Let $\underline{y}\in \mathcal{R}_4$ and $\sigma(\underline{y})\preceq \underline{e}$. Then
$$\sigma(\underline{y})\in \mathcal{R}_4.$$
Let $\sigma=\tau^j$. Then
$$
\#(\{(2i-j)\mod 5i,(3i-j) \mod 5i,(4i-j) \mod 5i\}\bigcap \{2i,3i,4i\})\geq 2.
$$
Hence,
$$
j\in\{0,i,-i,2i,-2i\}.
$$
When $j=\pm i,\pm 2i$, $\tau^j(\underline{y})\npreceq \underline{e}$. Hence, $j=0$,that is, $\sigma=1.$
\end{proof}

\begin{lemma}\label{R3}
Let $\mathcal{R}_3=\{\underline{y}\in St:\underline{y}\preceq \underline{e},wt(\underline{y})=3,\underline{y}
\text{~is odd},\underline{y}\notin \mathcal{R}^1\bigcup\mathcal{R}^2\}$. Then
$\#(\mathcal{R}_3)=3(i-1)$. Further, let $\sigma\in G$ satisfying $\sigma(\underline{y})\preceq \underline{e}$  for any $\underline{y}\in \mathcal{R}_3$. Then $\sigma=1$.
\end{lemma}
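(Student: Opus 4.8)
The plan is to combine the oddness criterion of Lemma~\ref{S} with an explicit classification of the weight-$3$ sequences $\underline{y}\preceq\underline{e}$ by how their support meets $A=\{0,1,\dots,i-1\}$ and $B=\{2i,3i,4i\}$; by~(\ref{(17)}), $\underline{e}_j=1$ exactly when $j\in A\cup B$. For $\underline{y}\in St$ write $N(\underline{y})=\#\{\sigma\in G:\sigma(\underline{y})\preceq\underline{e}\}$; unwinding the action of $G$ on $Z_m$, this is the number of $j\in Z_m$ with $(\mathrm{supp}(\underline{y})-j)\bmod m\subseteq A\cup B$, it is constant on $q$-cyclotomic cosets, and by Lemma~\ref{S} the sequence $\underline{y}$ is odd iff $N(\underline{y})$ is odd. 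Since $wt(\underline{y})=3$ and $\underline{y}\preceq\underline{e}$, the support of $\underline{y}$ is a $3$-subset of $A\cup B$, and I would argue by cases on $k:=\#(\mathrm{supp}(\underline{y})\cap B)\in\{0,1,2,3\}$.

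First I would rule out $k\ne2$. If $k=0$ then $\mathrm{supp}(\underline{y})\subseteq A$; applying a suitable power of $\tau$ one may assume the highest set bit sits at position $i-1$, so $\underline{y}$ lands in some $\mathcal{R}^1_l$, and Lemmas~\ref{Rl1} and~\ref{R1} show that oddness forces $l\equiv i\pmod2$, i.e. the coset of $\underline{y}$ is already accounted for by $\mathcal{R}^1$ (and if $\underline{y}$ is not odd it is excluded at once). If $k=1$ then $\mathrm{supp}(\underline{y})$ has two points in $A$ and one in $B$; since $\underline{y}\preceq\underline{e}$ forces $\underline{y}_j=0$ for every $j\ge i$ with $j\notin\{2i,3i,4i\}$, Conditions (1)--(3) defining $\mathcal{R}^2$ all hold, so $\underline{y}\in\mathcal{R}^2$. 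If $k=3$ then $\mathrm{supp}(\underline{y})=\{2i,3i,4i\}$, and $\tau^{4i}$ carries $\underline{y}$ to the sequence supported on $\{0,3i,4i\}\subseteq A\cup B$; checking that $j=0$ and $j=4i$ are the only valid shifts then gives $N(\underline{y})=2$, so $\underline{y}$ is not odd. Hence every member of $\mathcal{R}_3$ has $k=2$.

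So let $\mathrm{supp}(\underline{y})=\{a,b_1,b_2\}$ with $a\in A$ and $\{b_1,b_2\}\subset B$, whence $b_1-b_2\equiv\pm i$ or $\pm2i\pmod m$. The decisive step is to list all $j$ with $\{a-j,\,b_1-j,\,b_2-j\}\bmod m\subseteq A\cup B$: a shift preserves $b_1-b_2$, so $\{b_1-j,b_2-j\}$ must be one of the at most three pairs inside $A\cup B$ with that difference — $\{2i,3i\},\{3i,4i\},\{4i,0\}$ when the difference is $i$, and $\{0,2i\},\{2i,4i\},\{0,3i\}$ when it is $2i$ — which leaves exactly three candidate values of $j$, one being $j=0$. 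Imposing in addition $a-j\in A\cup B$, I expect the outcome: $j=0$ is always valid; of the other two candidates exactly one can be valid, and only when $a=0$ (the point $0\in A$ being exceptional because $0\equiv 4i+i\equiv 3i+2i\pmod m$); and the third candidate is never valid. Hence $N(\underline{y})=1$ for $a\ne0$ and $N(\underline{y})=2$ for $a=0$, so the odd sequences with $k=2$ are exactly those with $a\in\{1,\dots,i-1\}$. Combined with the previous paragraph this identifies $\mathcal{R}_3$ as the set of sequences with support $\{a,b_1,b_2\}$ where $a\in\{1,\dots,i-1\}$ and $\{b_1,b_2\}\subset\{2i,3i,4i\}$; there are $3$ choices of $\{b_1,b_2\}$ and $i-1$ of $a$, so $\#(\mathcal{R}_3)=3(i-1)$. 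For each such $\underline{y}$ we have shown $N(\underline{y})=1$, i.e. $\sigma=1$ is the only $\sigma\in G$ with $\sigma(\underline{y})\preceq\underline{e}$, which is the final assertion (in particular these $3(i-1)$ sequences lie in pairwise distinct cosets).

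The only genuine labour is the modular bookkeeping in the $k=2$ case — for each of the three shapes of $\{b_1,b_2\}$ one runs through the three candidate shifts and tests $a-j\in A\cup B$, which is routine but easy to slip on, and I expect this to be the main obstacle. A secondary subtlety is the $k=0$ case, where ``$\underline{y}\notin\mathcal{R}^1$'' must be read up to the $G$-action: one uses that every odd weight-$3$ sequence supported in $A$ is $G$-equivalent to a member of $\mathcal{R}^1$, so that its coset is already counted there and adds nothing to $\mathcal{R}_3$. Both steps run closely parallel to the proofs of Lemmas~\ref{R2} and~\ref{R4}.
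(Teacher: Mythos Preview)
Your proposal is correct and follows essentially the same case analysis as the paper's proof, splitting on the number $k=\#(\mathrm{supp}(\underline{y})\cap\{2i,3i,4i\})$ and showing that only the $k=2$, $a\neq0$ case yields odd sequences, each with $N(\underline{y})=1$. Your treatment is in fact slightly more careful than the paper's: you flag the $k=0$ subtlety explicitly (the paper tacitly assumes every such $\underline{y}$ already lies in $\mathcal{R}^1$, which is literally true only up to the $G$-action), and you obtain $\sigma=1$ directly from $N(\underline{y})=1$ rather than by a separate argument in the style of Lemma~\ref{R4}.
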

\begin{proof}
From the definition of $\mathcal{R}_3$, an element $\underline{y}$ in $\mathcal{R}_3$ satisfies one of the following
three conditions:

{\rm (1)} $\underline{y}_{2i}=\underline{y}_{3i}=\underline{y}_{4i}=1$;

{\rm (2)} $\#(\{j:\underline{y}_j=1,j\in \{2i,3i,4i\}\})=2,\underline{y}_0=1$;

{\rm (3)}$\#(\{j:\underline{y}_j=1,j\in \{2i,3i,4i\}\})=2,\#(\{j:\underline{y}_j=1,1\leq j \leq i-1\})=1$.

We will prove that $\underline{y}$ satisfying one of Condition (1) and (2) is even, and $\underline{y}$ in Condition (3) is odd.

$\underline{y}$  in Condition (1) or (2) can fall into the following four cases:

{\rm (a)}  $\underline{y}_{2i}=\underline{y}_{3i}=\underline{y}_{4i}=1$. It is denoted by $\underline{y}^{(1)};$

{\rm (b)} $\underline{y}_{3i}=\underline{y}_{4i}=\underline{y}_{0}=1$.
It is denoted by $\underline{y}^{(2)};$

{\rm  (c)}$\underline{y}_{2i}=\underline{y}_{3i}=\underline{y}_{0}=1$.
It is denoted by $\underline{y}^{(3)};$

{\rm (d)} $\underline{y}_{4i}=\underline{y}_{2i}=\underline{y}_{0}=1$. It is denoted by $\underline{y}^{(4)}.$

$\underline{y}^{(1)}$ satisfies that $\{\sigma\in G:\sigma(\underline{y}^{(1)})\preceq \underline{e}\}=\{\tau^0,\tau^{4i}\}$. Hence, $\underline{y}^{(1)}$ is even. Since $\tau^{4i}(\underline{y}^{(1)})=\underline{y}^{(2)}$,
$\underline{y}^{(2)}$ is also even.

$\underline{y}^{(3)}$ satisfies $\{\sigma\in G:\sigma(\underline{y}^{(3)})\preceq \underline{e}\}=\{\tau^0,\tau^{3i}\}$. Hence, $\underline{y}^{(3)}$ is even. Since $\tau^{3i}(\underline{y}^{(3)})=\underline{y}^{(4)}$,
$\underline{y}^{(4)}$ is also even.

Consequently, $\underline{y}$  in Condition (1) or (2) is an even sequence.

$\underline{y}$  in Condition (3) satisfies
$$
\{\sigma\in G:\sigma(\underline{y})\preceq \underline{e}\}=\{\tau^0\}.
$$
Hence, $\underline{y}$ in Condition (3) is odd. Further, all the sequences in $\mathcal{R}_3$ fall into sequences in Condition (3). Then, $\#(\mathcal{R}_3)=3(i-1).$\\
The proof of $\sigma=1$ is similar to that in Lemma \ref{R4}.
\end{proof}

\begin{lemma}\label{R22}
Let $\mathcal{R}_2=\{\underline{y}^{(1)},\underline{y}^{(2)}\}$,
where
$$\underline{y}^{(1)}=\underbrace{00\cdots 01}_i\underbrace{00\cdots01}_i
\underbrace{00\cdots00}_i
\underbrace{00\cdots00}_i
\underbrace{00\cdots00}_i,$$
$$
\underline{y}^{(2)}=\underbrace{00\cdots 01}_i\underbrace{00\cdots00}_i\underbrace{00\cdots01}_i
\underbrace{00\cdots00}_i\underbrace{00\cdots00}_i.
$$
Let $\mathcal{R}_2'=\{\underline{y}\in St:\underline{y}\preceq \underline{e},wt(\underline{y})=2,\underline{y}\text{~is odd},
\underline{y}\notin \mathcal{R}^1\bigcup\mathcal{R}^2\}$. Then $\underline{y}^{(1)}$ is not cyclically equivalent to
$\underline{y}^{(2)}$. Further, $\mathcal{R}_2'=$
$
\{\underline{y}^{(1)}$,$\tau^i(\underline{y}^{(1)})$,$\tau^{4i}
(\underline{y}^{(1)})$,$\underline{y}^{(2)}$,$\tau^{2i}
(\underline{y}^{(2)})$,$\tau^{4i}(\underline{y}^{(2)})\}$.
\end{lemma}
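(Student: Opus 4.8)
The plan is to push everything down to a single autocorrelation count on the binary word $\underline e$. First observe that every element of $\mathcal R^2$ has weight at least $3$: conditions (2) and (3) in Lemma \ref{R27} already force one $1$ among $\{2i,3i,4i\}$ and two $1$'s among $\{0,\dots,i-1\}$. Hence for a weight-two $\underline y$ the requirement $\underline y\notin\mathcal R^1\cup\mathcal R^2$ is simply $\underline y\notin\mathcal R^1$, and I only have to classify the weight-two odd sequences $\underline y\preceq\underline e$ and then remove those whose $G$-orbit already meets $\mathcal R^1$. Put $S=\mathrm{supp}(\underline e)=\{0,1,\dots,i-1\}\cup\{2i,3i,4i\}$ and write a weight-two $\underline y\preceq\underline e$ as $\underline y=2^a+2^b$ with $a<b$ and $a,b\in S$. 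By Lemma \ref{S} (with $q=2$), $\underline y$ is odd iff $C_{e,q,m}(\underline y)=\#\{\sigma\in G:\sigma(\underline y)\preceq\underline e\}$ is odd, and since $\tau^k(2^a+2^b)=2^{a-k}+2^{b-k}$ (indices mod $m=5i$) this number equals the autocorrelation $R(d):=\#\{\,j\in S: j+d\in S \ (\mathrm{mod}\ 5i)\,\}$ with $d=b-a$. Thus the entire question reduces to computing $R(d)$ for every shift $d$, while keeping in mind that $R(d)=R(5i-d)$, that the $G$-orbit of $2^a+2^b$ depends only on the unordered pair $\{d,5i-d\}$, and that $R(5i/2)$ is automatically even (a word of that difference has a $2$-element $G$-stabiliser).

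I would then evaluate $R(d)$ by splitting the pairs $\{a,b\}$ into type A (both in $\{0,\dots,i-1\}$), type B (one in each block) and type C (both in $\{2i,3i,4i\}$). For type A only the difference $d=b-a$ itself, not its complement, can be realised inside $S$, so $R(d)=i-d$ for $1\le d\le i-1$; therefore the odd type-A words are exactly those with $i-d$ odd, and for such a word the shift $\tau^{\,b-(i-1)}$ sends it to $2^{i-1}+2^{i-(d+1)}\in\mathcal R^1_{d+1}$ with $d+1\equiv i\pmod 2$ (this is exactly Lemma \ref{Rl1}), so every odd type-A word lies in a $G$-orbit already represented in $\mathcal R^1$ and contributes nothing new, while the other type-A words are even. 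The remaining shifts are the ones carried by type-B and type-C pairs. Here the key computation is that each value $d$ in $\{i+1,\dots,4i\}$ is the difference of exactly one type-B pair, so for a cyclic gap $g$ with $i<g\le\lfloor 5i/2\rfloor$ and $g\neq 2i$ both $g$ and $5i-g$ come from one type-B pair and from no type-A or type-C pair, whence $R(g)=2$ is even; on the other hand $g=i$ is realised only by the pairs $\{2i,3i\},\{3i,4i\},\{0,4i\}$ and $g=2i$ only by $\{2i,4i\},\{0,2i\},\{0,3i\}$, so $R(i)=R(2i)=3$ is odd. Consequently the only weight-two odd words not already accounted for by $\mathcal R^1$ are the members $\preceq\underline e$ of the two $G$-orbits of $\underline y^{(1)}=2^{4i}+2^{3i}$ (cyclic difference $i$) and $\underline y^{(2)}=2^{4i}+2^{2i}$ (cyclic difference $2i$).

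Finally I would make these two orbits explicit. Running through $\tau^k(\underline y^{(1)})$ and $\tau^k(\underline y^{(2)})$ and intersecting with the words $\preceq\underline e$ — equivalently, reading off the six pairs found above — shows that the survivors are $\underline y^{(1)}$, $\tau^i(\underline y^{(1)})=2^{3i}+2^{2i}$, $\tau^{4i}(\underline y^{(1)})=2^0+2^{4i}$ from the first orbit and $\underline y^{(2)}$, $\tau^{2i}(\underline y^{(2)})=2^0+2^{2i}$, $\tau^{4i}(\underline y^{(2)})=2^0+2^{3i}$ from the second, i.e.\ precisely the six words listed in the statement; and $\underline y^{(1)}$ is not cyclically equivalent to $\underline y^{(2)}$ because a cyclic shift preserves the multiset of gaps between consecutive $1$'s, which is $\{i,4i\}$ for $\underline y^{(1)}$ and $\{2i,3i\}$ for $\underline y^{(2)}$, and these are distinct for every $i\ge 1$. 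The part I expect to be the real obstacle is the type-B/type-C bookkeeping in the second paragraph: pinning down, for every shift $d$ between $i$ and $4i$, exactly how many unordered pairs of $S$ realise $\{d,5i-d\}$ so that the parity of $R(d)$ is determined, together with the care needed at the degenerate shift $d=5i/2$ when $i$ is even; once that autocorrelation table is in hand, the type-A step and the concluding enumeration are routine.
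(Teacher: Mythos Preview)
Your autocorrelation approach via $R(d)=\#\{j\in S:j+d\in S\}$ is sound and cleaner than the paper's argument, which proceeds by an ad hoc case split: it verifies directly that $\{\sigma:\sigma(\underline y^{(k)})\preceq\underline e\}$ has size $3$ for $k=1,2$, lists the six resulting shifts, and then checks that the three remaining ``type-B'' families (one bit at $2i$, $3i$, or $4i$ and one bit at some $j\in\{1,\dots,i-1\}$) each admit exactly two covering shifts and are therefore even. Your single computation of $R(d)$ recovers all of this at once and makes the orbit structure transparent; the parity table you describe is correct.

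There is, however, a gap that your argument and the paper's share. You dispose of an odd type-A word (both bits in $\{0,\dots,i-1\}$) by saying it ``lies in a $G$-orbit already represented in $\mathcal R^1$ and contributes nothing new''. But the exclusion clause in the definition of $\mathcal R_2'$ is $\underline y\notin\mathcal R^1\cup\mathcal R^2$, not ``the $G$-orbit of $\underline y$ misses $\mathcal R^1$''. Membership in $\mathcal R^1\subseteq\bigcup_l\mathcal R^1_l$ forces the top bit to sit at position $i-1$, so for $i\ge 4$ there are odd type-A words $\preceq\underline e$ that are not literally in $\mathcal R^1$: e.g.\ for $i=4$ the word with support $\{0,1\}$ has $R(1)=i-1=3$, is odd, and lies neither in $\mathcal R^1_2$ (support $\{2,3\}$) nor in $\mathcal R^1_4$ (supports containing $\{0,3\}$). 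Such a word satisfies every clause in the definition of $\mathcal R_2'$, so as literally stated $\mathcal R_2'$ strictly contains the six listed elements once $i\ge 4$. The paper's proof simply never treats the type-A case, so it has the same defect. What both arguments actually establish is the statement that matters for Lemma~\ref{R}: among weight-two odd words $\preceq\underline e$, the $G$-orbits not already represented in $\mathcal R^1$ are exactly those of $\underline y^{(1)}$ and $\underline y^{(2)}$, and the six listed elements are precisely the members of those two orbits lying under $\underline e$. If you want the lemma as written, either restrict to $i\le 3$ or read the exclusion as ``$G$-orbit disjoint from $\mathcal R^1$''.
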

\begin{proof}
It is obvious that $\underline{y}^{(1)}$ and $\underline{y}^{(2)}$ are not cyclically equivalent.
Note that
$$\{\sigma\in G:\sigma(\underline{y}^{(1)})\preceq \underline{e}\}=\{\tau^0,\tau^i,\tau^{4i}\},$$
$$\{\sigma\in G:\sigma(\underline{y}^{(2)})\preceq \underline{e}\}=\{\tau^0,\tau^{2i},\tau^{4i}\}.$$
Then $\underline{y}^{(1)},\underline{y}^{(2)}\in \mathcal{R}_2'$.
From the definition of $\mathcal{R}_2'$,
$\tau^i(\underline{y}^{(1)})4$, $\tau^{4i}(\underline{y}^{(1)})$,
$\tau^{2i}(\underline{y}^{(2)})$, $\tau^{4i}(\underline{y}^{(2)})\in \mathcal{R}_2'$.
To complete the proof, we just show that $\underline{y}$ in the following three conditions does not lie in $\mathcal{R}_2'$.

{\rm (1)} $wt(\underline{y})=2,\underline{y}_{4j}=\underline{y}_j=1,0<j\leq {i-1}$. It is denoted by $\underline{y}^{(3)};$

{\rm (2)} $wt(\underline{y})=2,\underline{y}_{2j}=\underline{y}_j=1,0<j\leq {i-1}$. It is denoted by $\underline{y}^{(4)};$

{\rm (3)} $wt(\underline{y})=2,\underline{y}_{3j}=\underline{y}_j=1,0<j\leq {i-1}$. It is denoted by $\underline{y}^{(5)}.$

Note that
$$\{\sigma\in G:\sigma(\underline{y}^{(3)})\preceq \underline{e}\}=\{\tau^{0},\tau^{3i+j}\},$$
$$\{\sigma\in G:\sigma(\underline{y}^{(4)})\preceq \underline{e}\}=\{\tau^{0},\tau^{i+j}\},$$
$$\{\sigma\in G:\sigma(\underline{y}^{(5)})\preceq \underline{e}\}=\{\tau^{0},\tau^{2i+j}\},$$
Hence, $\underline{y}^{(3)}$, $\underline{y}^{(4)}$ and $\underline{y}^{(5)}$ are even.
Consequently, this lemma follows.
\end{proof}

If $i$ is an odd integer, any sequence with weight $1$ is an even sequence. If $i$ is even, any sequence with weight $1$ is an odd sequence. However, the zero sequence  is odd if $i$ is odd and it is even if $i$ is even. For convenience, some notations are given below.
\begin{align}
\mathcal{R}_1=\begin{cases}
\{\underbrace{00\cdots0}_{m-1}1\},&i\text{~is even}, \cr \emptyset,&i\text{~is odd}.\end{cases}\label{(18)}
\end{align}
\begin{align}
\mathcal{R}_0=\begin{cases}\emptyset,&i\text{~is even}, \cr \{\underbrace{00\cdots0}_m\},&i\text{~is odd}.\end{cases}
\label{(19)}
\end{align}
\begin{align}
\mathcal{R}=\mathcal{R}^1\bigcup
\mathcal{R}^2\bigcup\mathcal{R}_4\bigcup\mathcal{R}_3
\bigcup\mathcal{R}_2\bigcup\mathcal{R}_1\bigcup\mathcal{R}_0.
\label{(20)}
\end{align}
Then we have the following lemma.
\begin{lemma}\label{R}
Let $\mathcal{R}$ be defined above. Then

{\rm (1)} $\#(\mathcal{R})=\begin{cases} \frac{22}{3}(2^i-1)-3i,&i\text{~is even}\cr \frac{22}{3}(2^i-2)-3i+7 ,&i\text{~is odd}\end{cases};$

{\rm (2)} Let   $\sigma\in G$ satisfy $\sigma(\underline{x})\in \mathcal{R}$ for some $\underline{x}\in \mathcal{R}$. Then $\sigma=\tau^0;$

{\rm (3)} Let $\underline{x}\in \mathcal{R}$ and $x\neq 0^m$. Then $\#(\{\sigma(\underline{x}):\sigma\in G\})=m;$

{\rm (4)} If $x\in Supp(C_{e,q,m})$, there exists $\sigma\in G$ satisfying $\sigma(\underline{x})\in \mathcal{R}.$

Further,
$$
\widetilde{Supp}(C_{e,q,m})=\{x\in Z_n:\underline{x}\in \mathcal{R}\},
$$
and
$$
\#(Supp(C_{e,q,m}))=\begin{cases} m[\frac{22}{3}(2^i-1)-3i],&i\text{~is even}\cr m[\frac{22}{3}(2^i-2)-3i+6]+1 ,&i\text{~is odd}\end{cases}.
$$
\end{lemma}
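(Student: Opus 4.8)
The plan is to read off all four parts from the structural lemmas already established (Lemmas \ref{S}, \ref{Rl}, \ref{Rl1}, \ref{R1}, \ref{R2}, \ref{R27}, \ref{R4}, \ref{R3}, \ref{R22}) together with a bookkeeping argument, using two facts throughout: $x\in Supp(C_{e,q,m})$ iff $\underline{x}$ is odd (Lemma \ref{S}), and whether $\underline{x}$ is odd depends only on its $G$-orbit (reindex $\sigma\mapsto\sigma\tau^{-j}$ in $\#\{\sigma:\sigma(\underline{x})\preceq\underline{e}\}$). Write $\mathrm{supp}(\underline{e})=A\sqcup B$ with $A=\{0,1,\dots,i-1\}$ the length-$i$ run and $B=\{2i,3i,4i\}$, and for $\underline{x}\preceq\underline{e}$ set $a=\#(\mathrm{supp}(\underline{x})\cap A)$, $b=\#(\mathrm{supp}(\underline{x})\cap B)\in\{0,1,2,3\}$.

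First I would check that every element of $\mathcal{R}$ is odd: this is the definition for $\mathcal{R}^1$ and $\mathcal{R}_3$; it is Lemma \ref{R22} for $\mathcal{R}_2$ (count $3$); it follows from Lemmas \ref{R2} and \ref{R4} for $\mathcal{R}^2$ and $\mathcal{R}_4$ (there the count is $1$, since those sequences are themselves $\preceq\underline{e}$); and for $\mathcal{R}_1,\mathcal{R}_0$ it is the parity count (a weight-$1$ sequence is $\preceq\underline{e}$ for $i+3$ shifts, $0^m$ for $m=5i$ shifts). Hence $\{x:\underline{x}\in\mathcal{R}\}\subseteq Supp(C_{e,q,m})$. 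Next, for part (1) the seven pieces are pairwise disjoint since they are separated by $(a,b)$ and the total weight ($\mathcal{R}^1$: $b=0$; $\mathcal{R}^2$: $a\ge2,\ b\ge1$; $\mathcal{R}_4$: $b=3,\ a=1$; $\mathcal{R}_3$: $b=2,\ a=1$, weight $3$; $\mathcal{R}_2$: $a=0,\ b=2$; $\mathcal{R}_1,\mathcal{R}_0$: weight $\le1$), so $\#(\mathcal{R})$ is the sum of the cardinalities in Lemmas \ref{R1}, \ref{R27}, \ref{R4}, \ref{R3}, plus $2$, plus $\#(\mathcal{R}_1)=\#(\mathcal{R}_0)\in\{0,1\}$; simplifying the geometric sum yields the two displayed closed forms.

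The main work is part (4). Let $\underline{x}$ be odd; after a shift assume $\underline{x}\preceq\underline{e}$ and split on $(a,b)$. If $a\ge2$ and $b\ge1$, then $\underline{x}$ meets the defining conditions of $\mathcal{R}^2$. If $b=3$: when $a=1$, $\underline{x}\in\mathcal{R}_4$; when $a=0$, $\underline{x}$ is the weight-$3$ sequence of case (a) in Lemma \ref{R3}, which is even — excluded. If $b=2,\ a=1$: by Lemma \ref{R3}, $\underline{x}$ is even when its $A$-bit sits at position $0$ and odd otherwise, so oddness forces $\underline{x}\in\mathcal{R}_3$. If $b=0$ and $a\ge2$: shift so the top $A$-bit sits at $i-1$ (this keeps the support inside $A$, its diameter being $\le i-1$); the result lies in some $\mathcal{R}^1_l$, is still odd, hence by Lemmas \ref{Rl1} and \ref{R1} it lies in $\mathcal{R}^1$. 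The remaining cases have weight $\le2$: an odd weight-$2$ sequence $\preceq\underline{e}$ outside $\mathcal{R}^1\cup\mathcal{R}^2$ is one of the six sequences of Lemma \ref{R22}, each a shift of $\underline{y}^{(1)}$ or $\underline{y}^{(2)}\in\mathcal{R}_2$; a weight-$1$ sequence is odd only for $i$ even, when it shifts to the element of $\mathcal{R}_1$; and $0^m$ is odd only for $i$ odd, giving $\mathcal{R}_0$. Thus in every case a shift of $\underline{x}$ lies in $\mathcal{R}$. The hard part is verifying that this $(a,b)$-case split is exhaustive and is matched to the correct piece, including the asymmetry that the weight-$4$ sequence with $A$-bit at $0$ is odd (so lies in $\mathcal{R}_4$) while the analogous weight-$3$ one is even.

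Finally, for parts (2) and (3): Lemmas \ref{Rl}, \ref{R2}, \ref{R3}, \ref{R4} already say that for $\underline{x}$ in $\mathcal{R}^1,\mathcal{R}^2,\mathcal{R}_3,\mathcal{R}_4$ the only $\sigma\in G$ with $\sigma(\underline{x})\preceq\underline{e}$ is $\tau^0$, which is precisely that $\underline{x}$ has trivial stabilizer (orbit size $m$, part (3)) and is the unique representative of its orbit in that piece; for $\mathcal{R}_2$ one checks from Lemma \ref{R22} that $\underline{y}^{(1)},\underline{y}^{(2)}$ have trivial stabilizer, are cyclically inequivalent, and that their only other shifts $\preceq\underline{e}$ have $a\le1$, hence lie in none of the pieces; $\mathcal{R}_1$ is one weight-$1$ orbit of size $m$ and $\mathcal{R}_0=\{0^m\}$ is a fixed point. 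Inequivalence across pieces then follows from the invariants $(a,b)$ and weight, once one observes that a shift of an $\mathcal{R}^1$-element that is $\preceq\underline{e}$ stays supported in $A$ (a set of diameter $\ge1$ inside $A\sqcup B$, containing two points at cyclic distance $\le i-1$, must lie in $A$ since the points of $B$ are isolated in $\underline{e}$), so it can never equal a sequence with a bit in $B$. Combining: $\{x:\underline{x}\in\mathcal{R}\}$ is a complete (part (4)) and irredundant (part (2)) transversal of the orbits of odd sequences, hence a legitimate choice of $\widetilde{Supp}(C_{e,q,m})$; and $\#(Supp(C_{e,q,m}))=\sum_{\underline{x}\in\mathcal{R}}\#\{\sigma(\underline{x}):\sigma\in G\}$ equals $m\cdot\#(\mathcal{R})$ when $i$ is even (then $0^m\notin\mathcal{R}$, so every orbit is free) and $1+m(\#(\mathcal{R})-1)$ when $i$ is odd (the singleton orbit of $0^m$ contributing $1$), which matches the two stated values.
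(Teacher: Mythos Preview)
Your proof is substantially more detailed than the paper's, which simply asserts that each part follows from the earlier lemmas and definitions; the overall strategy---assemble Lemmas~\ref{S}--\ref{R22}, check disjointness of the seven pieces via the $(a,b)$ bookkeeping, and then count orbits---is the same.

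One correction: in your treatment of parts (2)--(3) you write that Lemma~\ref{Rl} says ``for $\underline{x}\in\mathcal{R}^1$ the only $\sigma\in G$ with $\sigma(\underline{x})\preceq\underline{e}$ is $\tau^0$.'' This is false, and in fact Lemma~\ref{Rl1} explicitly gives $i-l+1$ such $\sigma$. Lemma~\ref{Rl} only gives the orbit-size statement (your part~(3)) for $\mathcal{R}^1$, not the uniqueness-of-cover statement. Consequently your argument for part~(2) has a small hole in the within-$\mathcal{R}^1$ case: you show that a shift of an $\mathcal{R}^1$-element that is $\preceq\underline{e}$ stays supported in $A$, hence cannot land in any piece with a $B$-bit, but you still need that it cannot land in $\mathcal{R}^1$ itself unless $\sigma=\tau^0$. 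This is immediate once observed: every $\underline{y}\in\mathcal{R}^1_l$ has $\underline{y}_{i-1}=1$ by definition, whereas for $0<j\le i-l$ the shift $\tau^j(\underline{y})$ has its topmost bit at position $i-1-j<i-1$, so $(\tau^j(\underline{y}))_{i-1}=0$ and $\tau^j(\underline{y})\notin\mathcal{R}^1$. With that one line added (and the sentence about Lemma~\ref{Rl} corrected to cite it only for orbit size), your argument is complete and matches the paper's intent.
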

\begin{proof}
Result (1) can be obtained by Lemma \ref{R1}, Lemma \ref{R27}, Lemma \ref{R4},  Lemma \ref{R3} and definitions of
$\mathcal{R}_2$, $\mathcal{R}_1$, $\mathcal{R}_0$.

Result (2) can be obtained by definitions of $\mathcal{R}^1$, $\mathcal{R}^2$, $\mathcal{R}_4$,
$\mathcal{R}_3$, $\mathcal{R}_2$, $\mathcal{R}_1$ and $\mathcal{R}_0$.

Result (3) can be obtained by Lemma \ref{R1}, Lemma \ref{R2}, Lemma \ref{R4},  Lemma \ref{R3} and definitions of $\mathcal{R}_2$, $\mathcal{R}_1$, $\mathcal{R}_0$.

Result (4) can be obtained by the definition of $\mathcal{R}$.

Finally,  the result on $\widetilde{Supp}(C_{e,q,m})$ can be obtained by Result (2) and (4). The result on $\#(Supp(C_{e,q,m}))$ can be obtained by Result (1) and (3).
\end{proof}
\begin{theorem}\label{LM2}
Let $s^\infty$ be a sequence defined by $s_t=Tr((1+\alpha^t)^{2^{4i}+2^{3i}+2^{2i}+2^i-1})$. Then the linear span $\mathbb{L}_s$ of $s^\infty$ is
\begin{align}
\mathbb{L}_s=\begin{cases}m[\frac{22}{3}(2^i-1)-3i],
&i\text{~is even}, \cr m[\frac{22}{3}(2^i-2)-3i+6]+1,&i\text{~is odd}.\end{cases}
\label{L2}
\end{align}
And the minimal polynomial of $s^\infty$ is
\begin{align}
\mathbb{M}_s(x)=\begin{cases}\prod_{i\in \mathcal{R}}m_{\alpha^{-i}}(x),&i\text{~is even},\cr (x-1)\prod_{i\in \mathcal{R},i\neq 0^m}m_{\alpha^{-i}}(x),&i\text{~is odd}.\end{cases}\label{M2}\end{align}
where $m_{\alpha^{-i}}(x)$ is the minimal polynomial of $\alpha^{-i}$.
\end{theorem}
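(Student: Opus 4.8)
The plan is to obtain Theorem~\ref{LM2} as an immediate corollary of the general framework in Theorem~\ref{LM}, once that framework is fed the explicit combinatorial input already assembled in Lemmas~\ref{S} and~\ref{R}. Recall that Theorem~\ref{LM} reduces the whole problem to two data attached to the exponent $e$: the cardinality of $Supp(C_{e,q,m})$, which is the linear span $\mathbb{L}_s$, and a transversal $\widetilde{Supp}(C_{e,q,m})$ of its $q$-cyclotomic cosets, whose associated minimal polynomials multiply out to $\mathbb{M}_s(x)$. So the entire argument is just the specialisation $q=2$, $m=5i$, $e=2^{4i}+2^{3i}+2^{2i}+2^i-1$, together with a citation of the structural lemmas.

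First I would invoke Lemma~\ref{S}: for this $e$, an element $x\in Z_n$ lies in $Supp(C_{e,q,m})$ exactly when its binary string $\underline{x}$ is an odd sequence in the sense of Definition~\ref{D3}. Then Lemma~\ref{R} does the work: parts~(4) and~(2) together show that the set $\mathcal{R}=\mathcal{R}^1\cup\mathcal{R}^2\cup\mathcal{R}_4\cup\mathcal{R}_3\cup\mathcal{R}_2\cup\mathcal{R}_1\cup\mathcal{R}_0$, built from the families studied in Lemmas~\ref{R1}, \ref{R2}, \ref{R27}, \ref{R4}, \ref{R3}, \ref{R22} and the degenerate sets $\mathcal{R}_1,\mathcal{R}_0$, is a legitimate transversal, i.e. $\widetilde{Supp}(C_{e,q,m})=\{x\in Z_n:\underline{x}\in\mathcal{R}\}$, while parts~(1) and~(3) evaluate $\#(Supp(C_{e,q,m}))$ in closed form, splitting on the parity of $i$. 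Substituting that count into $\mathbb{L}_s=\#(Supp(C_{e,q,m}))$ is precisely formula~(\ref{L2}), which settles the linear-span claim.

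For the minimal polynomial I would substitute the transversal into the product in Theorem~\ref{LM}, getting $\mathbb{M}_s(x)=\prod_{x\in Z_n,\ \underline{x}\in\mathcal{R}}m_{\alpha^{-x}}(x)$. The one delicate point is the zero string: by the definition of $\mathcal{R}_0$ it belongs to $\mathcal{R}$ exactly when $i$ is odd, and in that case its contribution is the minimal polynomial of $\alpha^{0}=1$, namely $x-1$, which I separate out as a linear factor; for $i$ even no such term occurs. By Lemma~\ref{R}(2)--(3) every other string of $\mathcal{R}$ has a full $q$-cyclotomic coset of size $m$ and distinct strings yield distinct cosets, so the surviving factors are pairwise distinct irreducible polynomials of degree $m$, and the two displayed expressions in~(\ref{M2}) fall out; the total degree matches~(\ref{L2}) as a consistency check. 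Hence Theorem~\ref{LM2} follows from Theorem~\ref{LM}, Lemma~\ref{S} and Lemma~\ref{R}, just as Theorem~\ref{LM1} was deduced in the inverse-function case.

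I expect the genuine obstacle to lie not in this final assembly but entirely inside Lemma~\ref{R} and the lemmas feeding it. The hard part is the combinatorial bookkeeping showing that the seven families $\mathcal{R}^1,\dots,\mathcal{R}_0$ pick out exactly one representative from each cyclotomic coset that meets the support --- no repetitions (Lemma~\ref{R}(2)), no omissions (Lemma~\ref{R}(4)), full coset length away from $0^m$ (Lemma~\ref{R}(3)) --- and then summing $\#(\mathcal{R}^1)\in\{\frac{2^i-1}{3},\frac{2^i-2}{3}\}$, $\#(\mathcal{R}^2)=7(2^i-i-1)$, $\#(\mathcal{R}_4)=i$, $\#(\mathcal{R}_3)=3(i-1)$, $\#(\mathcal{R}_2)=2$, $\#(\mathcal{R}_1),\#(\mathcal{R}_0)\in\{0,1\}$, multiplying by $m$, and adding the $+1$ correction for the singleton coset of $0^m$ when $i$ is odd. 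Granting those facts, which the excerpt has already established, Theorem~\ref{LM2} is a two-line consequence.
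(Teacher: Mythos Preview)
Your proposal is correct and follows exactly the paper's route: the paper's proof is the single sentence ``This theorem follows from Lemma~\ref{R} and Theorem~\ref{LM},'' and you have simply unpacked that citation, making explicit how parts (1)--(4) of Lemma~\ref{R} feed into the transversal and cardinality slots of Theorem~\ref{LM} and how the $(x-1)$ factor arises from $\mathcal{R}_0$ when $i$ is odd. Your additional invocation of Lemma~\ref{S} is harmless --- it is already absorbed into the statement of Lemma~\ref{R}(4) --- so the two arguments are the same.
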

\begin{proof}
This theorem follows from Lemma \ref{R} and Theorem \ref{LM}.
\end{proof}
\begin{theorem}\label{CK2}
The binary cyclic code $\mathcal{C}_s$ induced by the sequence
$s^{\infty}$ in Theorem ${\ref{LM2}}$ is an $[2^m-1,2^m-1-\mathbb{L}_s,d]$ code, whose generator polynomial is $\mathbb{M}_{s}(x)$ in Identity $(\ref{M2})$.
$\mathbb{L}_s$ is given by Identity $(\ref{L2})$. Further,
$$
d\geq \begin{cases}2^i+1,~i\equiv 0 \mod 2\cr 2^i+2,~i\equiv 1 \mod 2\end{cases}.
$$
\end{theorem}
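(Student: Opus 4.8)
The plan is to follow the blueprint of the proof of Theorem~\ref{CK1}: pass from $\mathbb{M}_s(x)$ to its reciprocal polynomial (which generates a cyclic code with the same weight distribution), exhibit a block of consecutive zeros in order to apply the BCH bound, and — only when $i$ is odd — sharpen the estimate by one via a parity argument. Recall that the reciprocal of $\mathbb{M}_s(x)$ has zero set $\{\alpha^{x}\colon x\in Supp(C_{e,q,m})\}$; by Lemma~\ref{S}, $x\in Supp(C_{e,q,m})$ precisely when $\underline{x}$ is an odd sequence, and by Lemma~\ref{R} one has $\widetilde{Supp}(C_{e,q,m})=\{x\in Z_n:\underline{x}\in\mathcal{R}\}\subseteq Supp(C_{e,q,m})$, with $Supp(C_{e,q,m})$ closed under the $2$-cyclotomic action.

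The crux is to produce $2^i$ consecutive exponents inside $Supp(C_{e,q,m})$. Put $a=2^{4i}+2^{3i}+2^{2i}$; I claim $\{a+1,a+2,\dots,a+2^i\}\subseteq Supp(C_{e,q,m})$. For $1\le t\le 2^i-1$ the integer $t$ has all its binary $1$'s among positions $0,\dots,i-1$, so $a+t$ is formed with no carry and $\underline{a+t}$ has $1$'s exactly in the positions $2i,3i,4i$ together with the $1$-positions of $\underline{t}$; matching this against the definitions, $\underline{a+t}$ lies in $\mathcal{R}^{2}$ when $wt(\underline t)\ge 2$ and in $\mathcal{R}_{4}$ when $wt(\underline t)=1$, whence $a+t\in\widetilde{Supp}(C_{e,q,m})\subseteq Supp(C_{e,q,m})$. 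For the endpoint one checks the congruence $(a+2^i)\,2^{i}\equiv a+1\pmod{n}$, so $a+2^{i}$ lies in the same $2$-cyclotomic coset as $a+1$ and therefore also belongs to $Supp(C_{e,q,m})$. Consequently $\alpha^{a+1},\dots,\alpha^{a+2^i}$ are $2^i$ consecutive zeros of the reciprocal generator polynomial, and the BCH bound yields $d\ge 2^i+1$; for $i$ even this is already the assertion.

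When $i$ is odd I would append the following. The all-zero string $0^m$ is fixed by each of the $m=5i$ elements of $G$ and $5i$ is odd, so $0^m$ is an odd sequence and $0\in Supp(C_{e,q,m})$ (this is exactly the remark preceding Lemma~\ref{R}). Hence $x-1$ divides $\mathbb{M}_s(x)$, so every codeword of $\mathcal{C}_s$ has coordinate sum $0$ over $GF(2)$, i.e.\ even Hamming weight, and $d$ is even; together with $d\ge 2^i+1$ this forces $d\ge 2^i+2$.

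I expect the only delicate point to be the middle paragraph: the no-carry verification, the bookkeeping that the support pattern consisting of $2i,3i,4i$ together with the $1$-positions of $\underline t$ satisfies the membership conditions for $\mathcal{R}^{2}$ (Lemma~\ref{R2}) and $\mathcal{R}_{4}$ (Lemma~\ref{R4}), and the modular identity $(a+2^i)2^i\equiv a+1\pmod n$ that pulls the last exponent into a $2$-cyclotomic coset already known to meet $Supp(C_{e,q,m})$. One should also confirm that these $2^i$ exponents are pairwise distinct modulo $n$ — immediate, since they lie in $\{1,\dots,n-1\}$ — so that the BCH hypothesis is genuinely met.
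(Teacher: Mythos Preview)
Your approach is essentially the paper's: pass to the reciprocal generator polynomial, exhibit $2^i$ consecutive zeros at exponents $a+j$ with $a=2^{4i}+2^{3i}+2^{2i}$, apply the BCH bound, and for odd $i$ use the even-weight property coming from $(x-1)\mid\mathbb{M}_s(x)$. The only difference is the range: the paper states $j\in\{0,\dots,2^i-1\}$, whereas you take $j\in\{1,\dots,2^i\}$ and close the top end via the cyclotomic identity $(a+2^i)2^i\equiv a+1\pmod n$; your choice is in fact the sounder one, since by the paper's own analysis in Lemma~\ref{R3} the bottom endpoint $a$ (the weight-$3$ string with ones at $2i,3i,4i$) is an \emph{even} sequence and hence does not lie in $Supp(C_{e,q,m})$.
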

\begin{proof}
The dimension of $\mathcal{C}_s$ can be obtained by Theorem $\ref{LM2}$ and the definition of $\mathcal{C}_s$. Then we will give the lower bound of minimum weight.

Note that
the weight distribution of the cyclic code generated by $\mathbb{M}_s(x)$ is
the same as that of the cyclic code generated by the reciprocal polynomial of $\mathbb{M}_s(x)$. From Lemma $\ref{R}$, the reciprocal polynomial of $\mathbb{M}_s(x)$ has zero points $\alpha^{2^{4i}+2^{3i}+2^{2i}+j}$, where
$j\in \{0,1,2,...2^{i}-1\}$.
From the BCH bound, we have $d\geq 2^i+1$. If $m$ is odd, then $\mathcal{C}_s$ is an even-weight code and $d\geq 2^i+2$.
\end{proof}
\begin{example}
Let $q=2$ and $m=5$. Let $\alpha$ be a primitive element of $GF(2^5)$ satisfying
$\alpha^5+\alpha^2+1=0$. Then the corresponding cyclic code $\mathcal{C}_s$ is an $[31,15,8]$ code, whose generator polynomial is
\begin{align*}
\mathbb{M}_s(x)&=x^{16}+x^{14}+x^{13}+x^{10}+x^{9}+x^{8}+
x^{7}+x^6+x^5+x^2+x+1.
\end{align*}
And its dual is an $[31,16,7]$ cyclic code.
\end{example}
\begin{example}
Let $q=2$ and $m=10$. Let $\alpha$ be a primitive element of $GF(2^{10})$ satisfying  $\alpha^{10}+\alpha^6+\alpha^5+\alpha^3+\alpha^2+\alpha+1=0$.
Then the corresponding cyclic code $\mathcal{C}_s$ is an $[1023,863]$ code. And its dual is an $[1023,160]$ cyclic code.
\end{example}

\section{Conclusion}

In this paper, we solve two open problems presented by Ding, which are on cyclic codes induced by the inverse APN functions and the Dobbertin APN functions. Further, we determine the dimensions and generator polynomials of these cyclic codes, which is from the determination of linear spans and minimal polynomials of sequences. The techniques we used in this paper are general and can be utilized in other open problems presented by Ding in \cite{D}. Actually, we present a framework for determining the minimal polynomial of the sequence $s^{\infty}$ defined by $s_t=Tr((1+\alpha^t)^e)$.

\begin{acknowledgements}
This work was supported by the National Natural Science Foundation of China (Grant Nos. 61272499, 10990011). Yanfeng Qi acknowledges support
from Aisino Corporation Inc.
\end{acknowledgements}

\end{document}